\documentclass[pdflatex,sn-mathphys-num]{sn-jnl}
\usepackage{mathrsfs}

\usepackage{graphicx}%
\usepackage{multirow}%
\usepackage{amsmath,amssymb,amsfonts}%
\usepackage{amsthm}%
\usepackage{mathrsfs}%
\usepackage[title]{appendix}%
\usepackage{xcolor}%
\usepackage{textcomp}%
\usepackage{manyfoot}%
\usepackage{booktabs}%
\usepackage{tabularx}
\usepackage{algorithm}%
\usepackage{algorithmicx}%
\usepackage{algpseudocode}%
\usepackage{listings}%
\usepackage[table]{xcolor}
\usepackage{array}
\usepackage{booktabs}
\usepackage{subfig}
\usepackage{hyperref} 
\usepackage{colortbl}
\usepackage{url}
\usepackage{xcolor}
\usepackage{setspace} 
\usepackage{etoolbox}
\makeatletter
\let\ps@headings\ps@empty
\let\ps@titlepage\ps@empty
\let\ps@plain\ps@empty
\makeatother
\definecolor{CommBlue}{RGB}{221,235,247}
\definecolor{DTGreen}{RGB}{226,239,218}
\definecolor{AIYellow}{RGB}{255,242,204}
\definecolor{SpacePurple}{RGB}{234,209,220}
\definecolor{GapRed}{RGB}{248,203,173}

\theoremstyle{thmstyleone}%
\newtheorem{theorem}{Theorem}%
\newtheorem{proposition}[theorem]{Proposition}%

\theoremstyle{thmstyletwo}%

\theoremstyle{thmstylethree}%

\begin{document}
\pagestyle{empty}
\title[A Knowledge-Centric Communication  For Autonomous Cislunar Communication]{A Knowledge-Centric Communication For Autonomous Cislunar Networks}

\author*[1]{\fnm{Afan} \sur{Ali}}\email{afan.ali@kfupm.edu.sa}

\author[1]{\fnm{Daniel} \sur{Benevides da Costa}}\email{danielbcosta@ieee.org}

\author[1]{\fnm{Ali} \sur{Arshad Nasir}}\email{anasir@kfupm.edu.sa}

\affil[1]{\orgdiv{Interdisciplinary Research Center for Communication Systems and Sensing (IRC-CSS), Department of Electrical Engineering}, \orgname{King Fahd University of Petroleum and Minerals (KFUPM)}, \orgaddress{\city{Dhahran}, \postcode{31261}, \country{Saudi Arabia}}}

\pagestyle{empty}

\abstract{
Future lunar infrastructure requires communication as a persistent service rather than a mission specific capability. Communication, digital twin, and artificial intelligence (AI) research have advanced autonomy independently while largely assuming that system state can eventually be reconciled with ground truth. Cislunar communication violates this assumption because propagation delay is fundamentally limited by the speed of light, visibility is governed by orbital geometry, and autonomous decisions often precede confirming observations. Here we introduce a knowledge-centric communication framework in which a digital twin continuously integrates delayed observations, communication physics, learned models, uncertainty quantification, and mission objectives into an evolving Operational Knowledge State. To quantify what is known, how uncertain it remains, and how current it is, we define two quantities over the fused Bayesian belief maintained for each communication link: \emph{Knowledge Entropy}, the differential entropy of the posterior belief, and \emph{Knowledge Freshness}, which generalizes Age of Information (AoI) from a single information stream to the temporal validity of the fused Operational Knowledge State. We prove that distributed knowledge fusion systematically improves operational knowledge by reducing uncertainty and demonstrate its importance for mission success through simulations and comparisons with flight data from the Longjiang-2 lunar micro-satellite. 
}

\maketitle

\section{Introduction}
Human exploration of the Moon is entering a different operational era. The Apollo missions relied on short-duration expeditions supported by mission-specific communication systems. On the contrary, current international programs envision a persistent cislunar ecosystem comprising crewed habitats, robotic explorers, scientific observatories, commercial services, and in-situ resource utilization. National Aeronautics and Space Administration's (NASA)'s Artemis program, the Lunar Gateway, European Space Agency's (ESA)'s Moonlight initiative, LunaNet, and the International Lunar Research Station together point toward continuously operating space infrastructure, in which communication is a persistent operational service rather than a mission-specific capability \cite{NASA_CPNT,LunaArchitecture2026}.

Delivering communication as a persistent service, rather than a mission-specific one, has required its own technical advances. Early work on Delay/Disruption Tolerant Networking (DTN) established the networking principles needed to operate under long propagation delays and intermittent connectivity \cite{schlesinger_delaydisruption_2017}. Optical communication later showed that laser links can substantially increase deep-space data rates over conventional radio-frequency (RF) systems, an approach validated in operation by NASA's Deep Space Optical Communications (DSOC) project \cite{velasco_operational_2026,DSOC2026}. NASA's Communication, Positioning, Navigation and Timing (CPNT) architecture, LunaNet, and emerging interdomain lunar communication frameworks have since defined interoperable communication services for cislunar operations \cite{NASA_CPNT,cetin_advancing_2026}. Communication research has also expanded from propagation modeling and link-budget analysis toward semantic communications, intelligent routing, and artificial intelligence (AI)-assisted systems that adapt to changing communication environments \cite{fernandez_assessing_2020,wei_semantic-empowered_2026,LunaArchitecture2026}.

Adapting communication systems to changing environments is, in a longer view, only the latest step in a series of shifts in what communication theory takes as its object. The classical paradigm of early days, established by Shannon's mathematical theory of communication, treated communication as the faithful reproduction of transmitted symbols independent of their meaning \cite{shannon1948}, a network-centric paradigm later shifted attention to routing, topology, and delay-tolerant delivery, introducing timeliness metrics, such as, the age of information (AoI) \cite{kaul2012realtime}, and a context-aware or semantic paradigm has more recently argued that what should be preserved is a message's meaning or task-relevance, not its exact symbol sequence \cite{gunduz2023beyond,qin2021semantic}. Each paradigm has left behind its own measurable quantities, from bit error rate (BER), throughput and AoI, to semantic similarity and task success, which is summarized in Fig.~\ref{fig1a}(top row). Digital Twin research has undergone a parallel series of shifts, from virtual counterparts representing a physical asset across its lifecycle \cite{Tao2019} to predictive, bidirectional decision-support systems combining hybrid physics--data models with continual updating, uncertainty quantification, and verification and validation \cite{Willcox2024Comment,Ferrari2024}, and more recently to knowledge-graph-based, AI-native, and distributed digital twin architectures for complex cyber--physical systems (Fig.~\ref{fig1a}, middle row) \cite{san_evolution_2026,zhou_digital_2026,hamwi_fl-twin_2026,li_exploring_2026,parwez_digital_2026,wang_digital_2026,duran_toward_2026}. This same trajectory, from passive representation toward active reasoning, is now described as a shift from reactive to agentic, or cognitive, digital twins, in which the twin itself contributes to the decision loop rather than only feeding it \cite{san_evolution_2026}. The autonomous operation these platforms increasingly support is itself only possible because of a parallel shift in AI, from onboard processing toward distributed edge intelligence, collaborative sensing, federated learning, semantic reasoning, and autonomous onboard inference. This is often described as a move from earth observation to earth action in which spacecraft interpret observations onboard, rather than transmitting raw measurements to ground stations \cite{barretta_toward_2026}. Space cloud architectures, onboard foundation models, distributed AI, and semantic communication point toward the same direction, with future space systems relying on onboard intelligence embedded within the communication infrastructure itself, not only on ground-based processing (Fig.~\ref{fig1a}, bottom row) \cite{barretta_toward_2026,wei_semantic-empowered_2026,lin_embracing_nodate}.
Communication, digital twin, and AI research have each, then, been converging on autonomy from a different starting point, but largely without reference to one another, and so they answer different questions. Communication frameworks optimize channels, routing, coding, and resource allocation. Digital twins focus on representing, synchronizing, and predicting physical systems. Onboard intelligence concentrates on extracting actionable information from sensed data. Despite these differences, all three share a common assumption that autonomous decisions can be supported by sufficiently accurate operational information (Fig.~\ref{fig1a}).

\begin{figure}[t]
  \centering
  \subfloat[]{%
    \includegraphics[width=0.8\columnwidth]{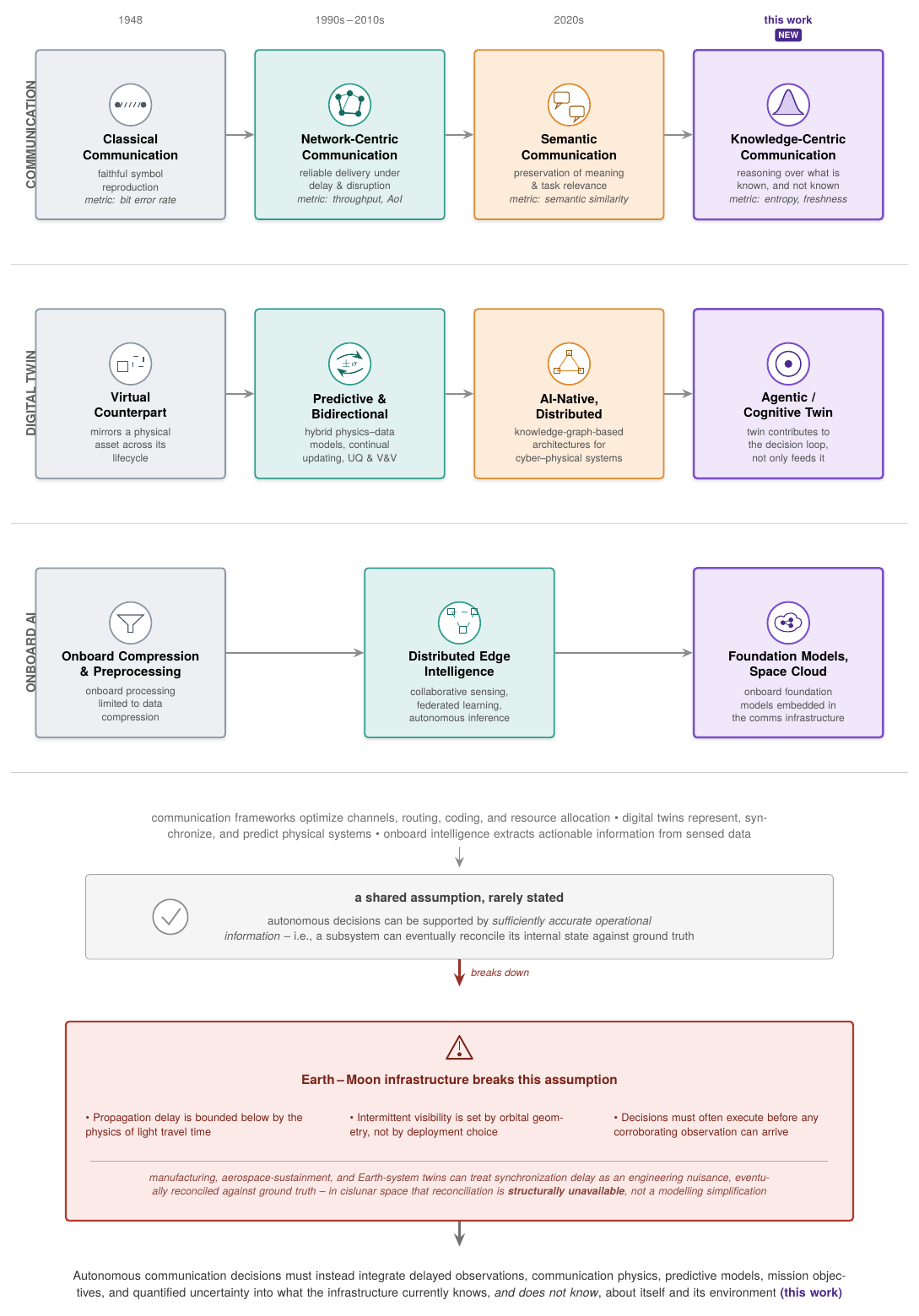}
    \label{fig1a}}
  \hfil
  \newpage
  \subfloat[]{%
    \includegraphics[width=0.5\columnwidth]{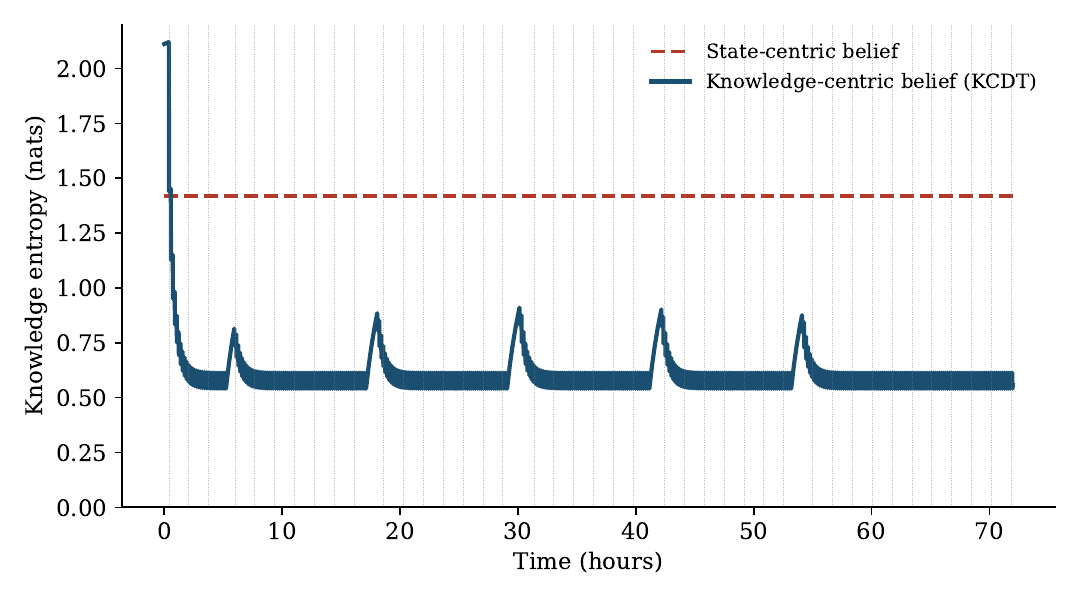}
    \label{fig1b}}
    \vspace{0.2cm}
  \caption{(a)~{\textbf{Three research threads converging on autonomy.} Communication theory, digital twins, and onboard AI independently converge toward autonomy while assuming eventual ground-truth reconciliation. Cislunar communication violates this assumption, motivating knowledge-centric communication}; (b)~\textbf{Knowledge Entropy under realistic delayed, intermittent observation (Regime~B, relay--Gateway link, 72-hour window).} The state-centric belief (dashed) stays close to a constant uncertainty regardless of outage length. The Knowledge-Centric belief (solid) grows during outages and contracts, though never to zero, once a delayed observation is fused.}
  \label{fig1}
\end{figure}

Future Earth--Moon infrastructure breaks this assumption, and does so in a way that is qualitatively different from the domains in which digital twins have so far matured. Digital twins in manufacturing, aerospace sustainment, and Earth-system have generally been able to treat synchronization delay as an engineering nuisance to be minimized, because ground-based sensing, near-real-time telemetry, or offline reprocessing eventually supplies a trustworthy reference state against which the twin can be validated \cite{Willcox2024Comment,Ferrari2024}. However, Earth--Moon communication removes this option, since propagation delay is bounded below by the physics of light travel time. Additionally, intermittent visibility is dictated by orbital geometry rather than deployment choices, and autonomous decisions must often be executed before any corroborating observation can arrive. The assumption shared by these prior domains, that a subsystem can eventually reconcile its internal state against ground truth, is therefore not a modeling simplification in cislunar communication but is structurally unavailable. Long propagation delays, intermittent line-of-sight (LoS) visibility, dynamic orbital geometries, heterogeneous communication technologies, sparse sensing, changing resource availability, and shifting mission priorities together mean that no subsystem can maintain a complete, continuously synchronized picture of the operational environment. Therefore, autonomous communication decisions may not rely on estimates of instantaneous physical state alone and must instead integrate delayed observations, communication physics, predictive models, historical information, mission objectives, and quantified uncertainty into a representation of what the infrastructure currently knows, and does not know, about itself and its environment. Reasoning over such a representation is only useful if what is known, and how current that knowledge remains, can be measured. This motivates the introduction of two metrics, which we call \textbf {Knowledge Entropy} and \textbf {Knowledge Freshness}. We define knowledge entropy as the entropy of the (approximate) posterior distribution the twin maintains over the unobserved components of infrastructure state, quantifying how much genuinely remains unknown. Similarly,  knowledge freshness is defined as an age-of-information-type quantity \cite{kaul2012realtime} measuring how long ago the evidence underlying a given belief was collected and, therefore, how far it may have decayed toward staleness. 

Unlike classical information-theoretic or AoI metrics, which are typically defined on a single observed data stream, knowledge entropy and knowledge freshness are defined jointly over the fused, multi-source knowledge state itself, capturing not only what is known but how current that knowledge remains. Consequently, state-centric belief assumes synchronization reports near-constant, and misleadingly low uncertainty, whereas a knowledge-centric belief grows appropriately more uncertain between observations and contracts, though rarely to zero, when a delayed observation is finally fused in. To ground this behavior in simulation rather than illustration alone, we model a representative cislunar link, comprising of the relay-to-Gateway hop of a three-segment Earth–relay–Gateway–surface chain. It combines two-body orbital propagation, a closed-form path-loss and Doppler model, and a stochastic residual channel representing the multipath and scintillation effects the physics model cannot capture. Intermittent visibility and delayed, incomplete observations follow directly from this geometry, giving a concrete testbed for the knowledge-state behavior introduced above. Figure~\ref{fig1b} traces Knowledge Entropy, in natural units of information (nats), over a representative 72-hour relay--Gateway communication link. The contrast is visible from the onset. The state-centric curve remains nearly constant, whereas the knowledge-centric curve increases progressively during communication outages as uncertainty accumulates. Whenever a delayed observation becomes available, it is immediately incorporated into the Operational Knowledge State through Bayesian fusion, causing the entropy to decrease before beginning to grow again if no further observations are received. Thus, the timing of these entropy reductions is determined by the arrival of delayed observations arising from the underlying communication geometry, rather than by periodic updates. In contrast, a state-centric representation changes only when a new observation is received, effectively assuming that nothing has changed between updates. Operational Knowledge does not make this assumption; instead, it continues to evolve even while the communication link remains silent.

The reason is that the two representations are answering different questions. State-centric tracking assumes the last observation remains a valid description of the communication channel until a new observation becomes available, so its reported uncertainty remains nearly constant at approximately 1.4~nats regardless of how long the communication outage or observation gap persists. The knowledge-centric twin does not make such an assumption. Instead, the entropy increases steadily from its post-update value to roughly 2.1~nats during the longest observation gap in the trace before decreasing again when a delayed observation is incorporated through Bayesian fusion. One detail worth emphasizing is that the entropy never returns to zero after an update. Each newly received observation resolves only the uncertainty supported by the available evidence, while the residual uncertainty arising from incomplete coverage, propagation delay, and channel randomness remains represented within the Operational Knowledge State.

However, these two metrics vary in their significance across the infrastructure in different cases, since what must be known precisely, and how fresh that knowledge must remain, is set by the mission itself. For example, a safety-critical handover during a robotic rover's lunar excursion may tolerate almost no entropy in the relevant link's state, while a background telemetry channel for a dormant instrument may tolerate substantial staleness, so that mission objectives act on the knowledge state as a requirement-setting function, specifying, for each element of the infrastructure, the entropy and freshness that autonomous decisions require, rather than demanding uniform precision throughout.

Building a representation with these properties is precisely what none of communication, digital twin, or AI research set out to do on its own, leaving open how autonomous space infrastructure should represent, update, and reason over operational knowledge when communication, navigation, sensing, and mission objectives are simultaneously delayed, incomplete, heterogeneous, and uncertain. We answer this by moving beyond communication optimization or digital twin synchronization toward a framework in which operational knowledge itself, rather than symbols, channels, or meaning alone, becomes the object of autonomous reasoning. In this work, this shift is known as the \emph{Knowledge Age of Communications}. Concretely, we argue that future Earth--Moon communication systems should move from state-centric communication to knowledge-centric reasoning, realized through a Knowledge-Centric Digital Twin (KCDT) that continuously fuses communication physics, delayed observations, learned models, mission objectives, and quantified uncertainty into a single evolving knowledge state. Earth--Moon communication motivates the framework, but it applies to any autonomous cyber--physical infrastructure whose reliable decisions depend on evolving knowledge rather than instantaneous observation.

A framework for earth-moon communication is depicted in Fig.~\ref{fig2}. Fig.~\ref{fig:2a} shows the modern cislunar ecosystem that motivates it, comprising of a heterogeneous, time-varying infrastructure spanning terrestrial ground stations, Earth-orbit satellites (SAT1, SAT2), lunar-orbit satellites, robotic rovers on the Moon, and LEO satellites, where long propagation delays, intermittent and obstructed visibility, and resource-constrained links mean no single link is permanently available. Fig.~\ref{fig:2b} shows how the KCDT turns this environment into an Operational Knowledge State by fusing four heterogeneous streams. It comprises of delayed observations, which describe a past rather than a present state; communication physics, which constrains what that state can plausibly be independent of any single observation; learned models, which capture the residual behavior physics cannot explain, together with their own predictive uncertainty; and mission objectives, which set how precisely each part of the environment must be known. This keeps the knowledge state fit for the decisions it supports rather than exhaustively detailed. This state separates confident from uncertain knowledge and supports decisions calibrated to that confidence, such as, adaptive routing, proactive link switching, uncertainty-hedged resource allocation, and operator-facing analytics. Ultimately, their resulting actions close the loop back onto the physical infrastructure, making Fig.~\ref{fig:2b} a continuous cycle in which knowledge, instead of instantaneous state, is what the system produces and acts on.

\begin{figure}[t]
  \centering
  \subfloat[]{%
  \includegraphics[width=0.9\columnwidth]{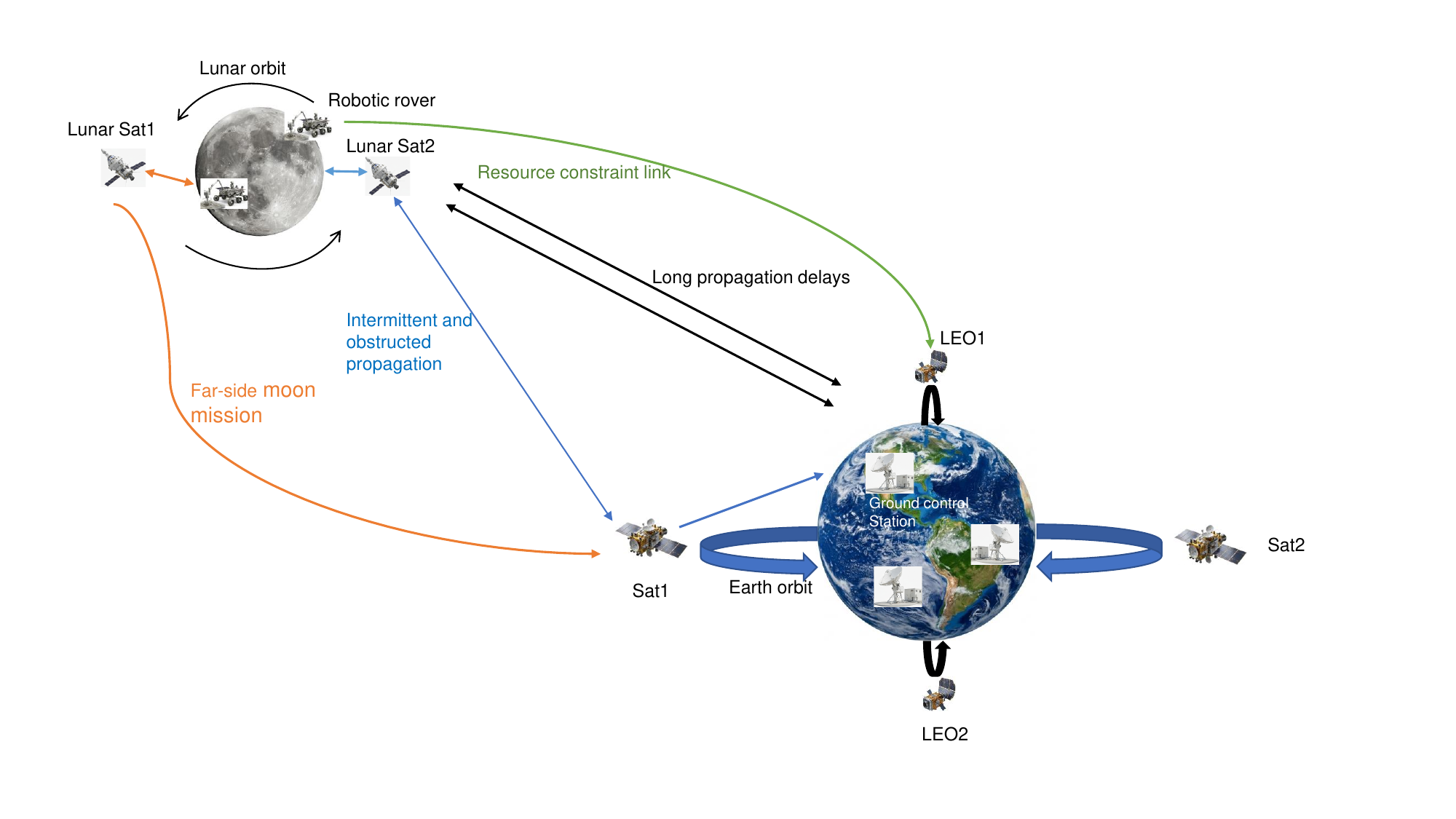}
  \label{fig:2a}}
  \hfil
  \vspace{0.1cm}
  \subfloat[]{%
    \includegraphics[width=0.9\columnwidth]{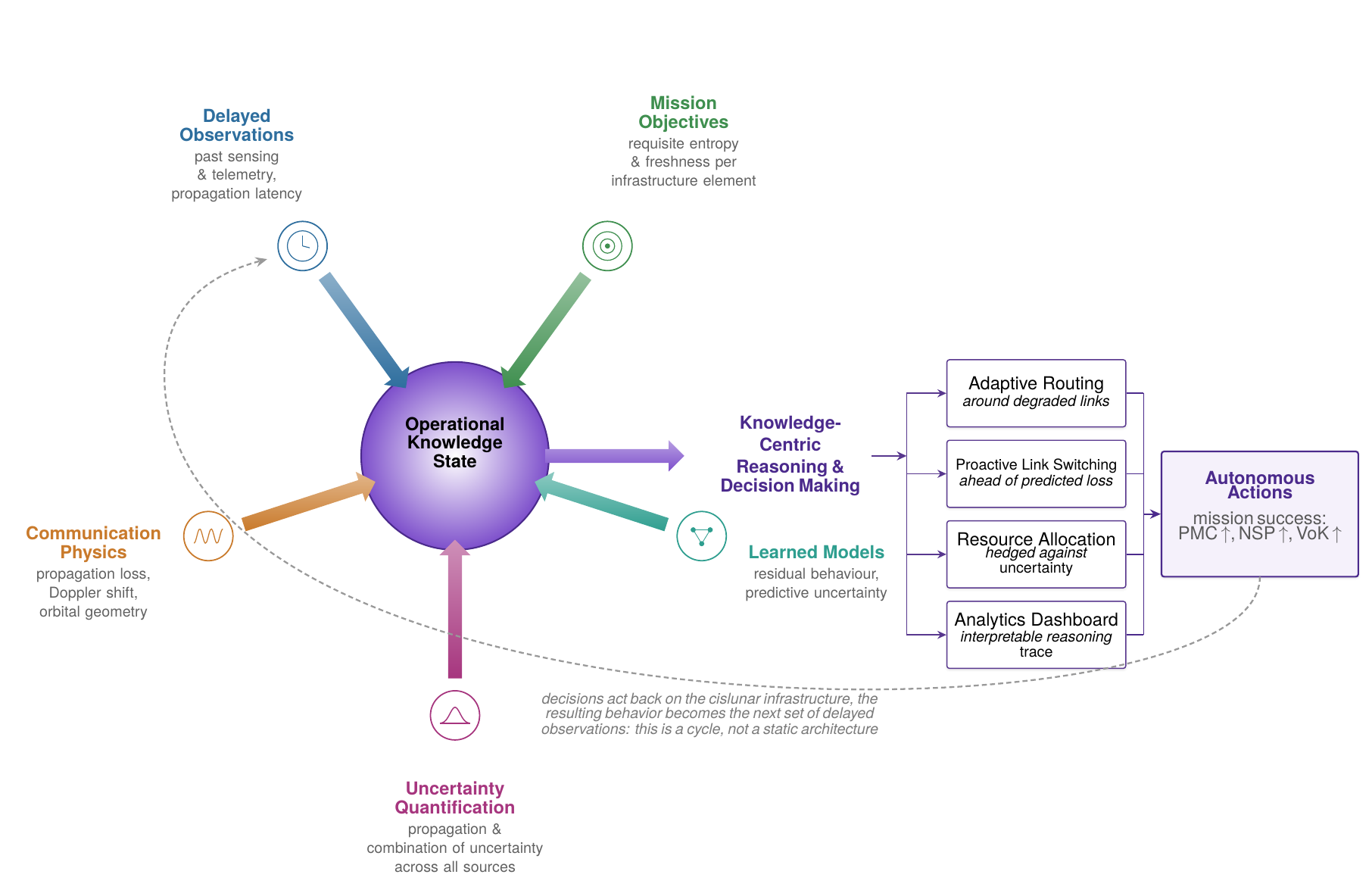}
    \label{fig:2b}}
    \vspace{0.1cm}
    \caption{(a)~The cislunar ecosystem, comprising terrestrial ground stations, relay satellites, the lunar gateway, and robotic explorers, represented as a dynamic graph subject to intermittent visibility, resource constraints, and heterogeneous communication technologies; (b)~the knowledge-centric digital twin continuously fuses delayed observations, communication physics, learned models, mission objectives, and uncertainty quantification into a single evolving operational knowledge state.}
    \label{fig2}
\end{figure}

This distinction is not merely representational, it changes operation of the system as soon as a link falls silent. Figure~\ref{fig3} makes this concrete by following the same observation gap under each paradigm, side by side. A state-centric system pauses, i.e., its belief is frozen at the last observation, it reports unchanged confidence throughout the silence, and it is caught unprepared when a new observation finally arrives and reveals how much may have drifted. A knowledge-centric system instead keeps reasoning through the gap, explicitly tracking how its uncertainty grows and continuing to produce calibrated decisions, such as switching links proactively or hedging resource allocation, so that the eventual new observation refines an already-appropriate belief rather than forcing an abrupt correction. Consequently, when observations stop, communication should not stop reasoning, and Fig.~\ref{fig3} depicts this work's argument for why that difference is fundamental rather than cosmetic.

\begin{figure}[t]
\centering
\includegraphics[width=1\textwidth]{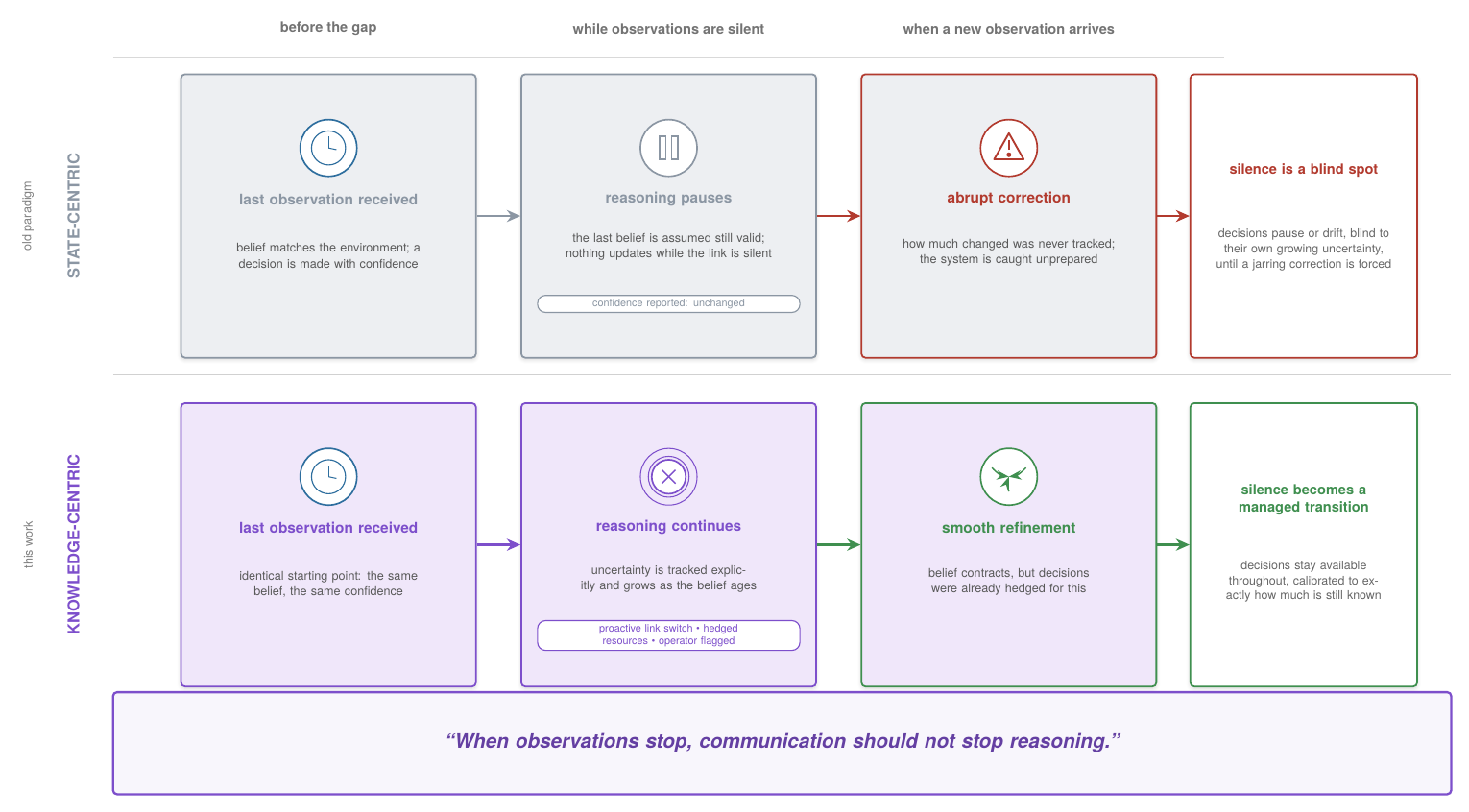}
\caption{\textbf{When observations stop, communication should not stop reasoning.} The same silent link, reasoned about two different ways, both starting from an identical belief. Under state-centric reasoning (top), belief freezes for the duration of the gap and a new observation forces an abrupt correction. Under knowledge-centric reasoning (bottom, this work), uncertainty is tracked explicitly as it grows, decisions stay available and hedged throughout, and the new observation smoothly refines an already-appropriate belief. Silence is a blind spot in the first case and a managed transition in the second.}
\label{fig3}
\end{figure}

\section{Results}
We evaluated the KCDT through five complementary analyses. We first tested the theoretical fusion properties directly, asking not just whether combining sources helps but whether it is sometimes the only way to meet a mission requirement. We then looked at how operational knowledge behaves over time under realistic outages, checked whether a residual model trained on one orbital regime still holds up on another, and benchmarked the full knowledge-driven decision layer against simpler state-centric, physics-only, and data-driven alternatives. Finally, since all of this is built on a simulated cislunar environment, we checked the same core properties, fusion, entropy behavior, and calibration, against real flight data from Longjiang-2, a lunar-orbit micro-satellite with a published very high frequency / ultra-high frequency (VHF/UHF) communication record~\cite{wei_design_2020}. This last analysis matters for a reason the first four cannot address on their own: it tests the framework against an orbit, an outage pattern, and an in-flight anomaly that we did not get to choose, rather than one drawn from our own simulation. Unless stated otherwise, results are Monte Carlo simulations run under the cislunar communication environments described in the Methods.

\subsection{Distributed knowledge consistently outperforms individual observations}

Distributed observations produced more informative Operational Knowledge than any individual communication source. Across 5000 randomized communication realizations, the posterior uncertainty after Bayesian knowledge fusion was lower than the uncertainty of the best individual observation source in every single trial, as illustrated in Fig.~\ref{fig4a}, a direct numerical check of the Fusion Dominance property proved theoretically as Proposition~1 in the Methods. On average, fusion cut posterior variance by 31.5\% relative to the best individual source, which is the practical signature of a twin that keeps accumulating information rather than one that just averages whatever comes in.

That improvement turns out to matter operationally, not just statistically. To ask whether fusion is merely nice to have or actually necessary, we built a deliberately hard case, the Fusion Necessity (S3) scenario, where two observation sources were each individually informative, but neither was ever quite enough on its own to meet the mission's entropy requirement. As expected, source~A and B, both failed the requirement in all 200 Monte Carlo realizations (0\% success). Once fused, the same requirement was met in every realization, i.e., 100\% success as shown in Fig.~\ref{fig4b}.

So there are regimes where no amount of improving a single observation source gets you to mission success, the only route through is a properly fused Operational Knowledge State built from multiple heterogeneous sources. That is really the main point of the KCDT here, it is not just about tightening error bars, it is about making decisions possible that no single-source estimator could support at all.

\begin{figure}[t]
\centering
\subfloat[]{\includegraphics[width=0.47\textwidth]{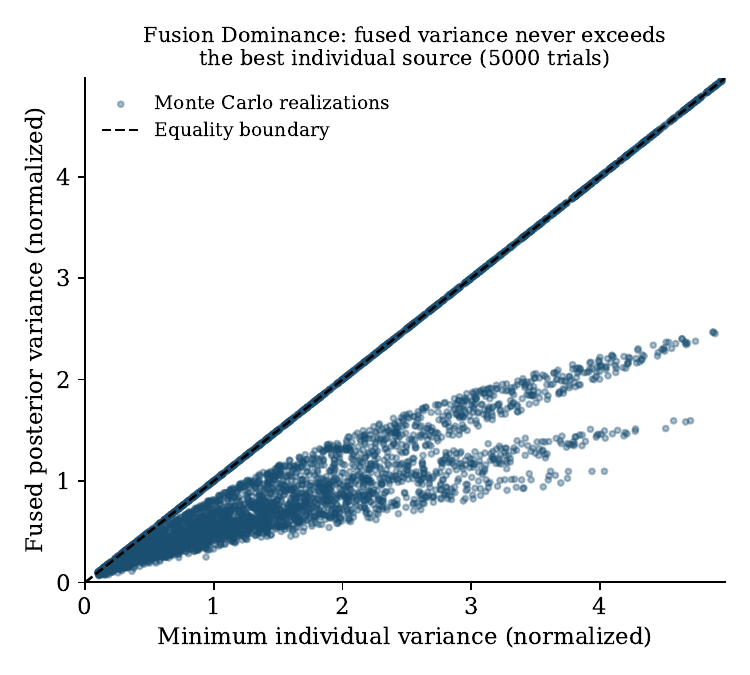}\label{fig4a}}
\hfil
\subfloat[]{\includegraphics[width=0.47\textwidth]{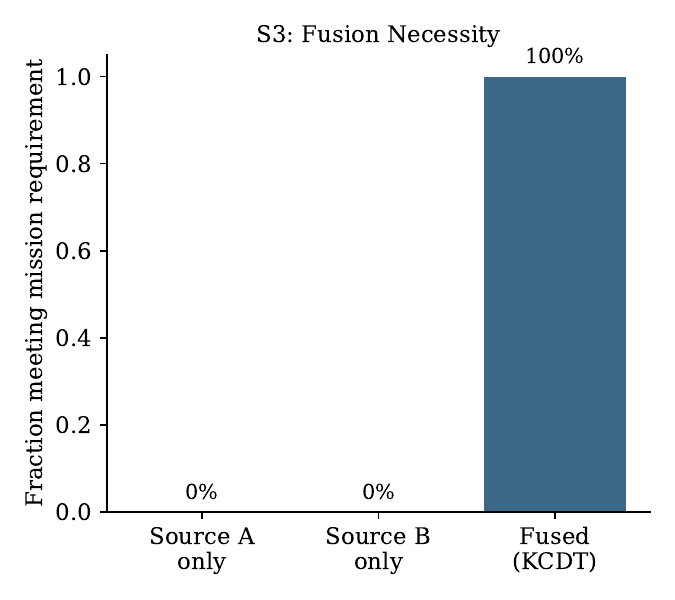} \label{fig4b}}
\hfil
\subfloat[]{\includegraphics[width=0.47\textwidth]{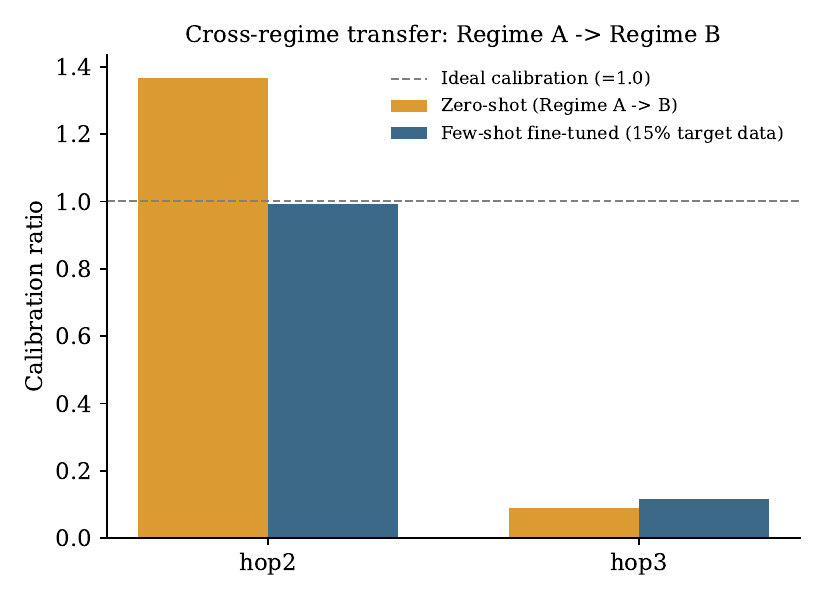} \label{fig4c}}
\caption{\textbf{Distributed knowledge fusion.} (a)~ Fusion Dominance across 5000 randomized trials. Each circle represents one Monte Carlo realization, with the horizontal axis showing the minimum variance among the individual observation sources and the vertical axis showing the corresponding fused posterior variance. The solid diagonal line denotes equality; (b)~Fusion Necessity (scenario S3): neither source~A nor source~B alone ever satisfies the mission entropy requirement (0\% of 200 realizations), whereas the fused Operational Knowledge State satisfies it in every realization (100\%). (c)~\textbf{Cross-regime transfer of the learned residual model, Regime~A~$\to$~Regime~B.} Calibration ratio (ideal~=~1.0, dashed line) for hop2 and hop3 in the zero-shot setting compared with after few-shot fine-tuning on 15\% of target-hop data. Both hops land within 2\% of ideal calibration after fine-tuning.}
\label{fig4}
\end{figure}

\subsection{Operational knowledge transfers across communication regimes}

Operational Knowledge remained transferable across communication regimes despite substantial differences in observation patterns and communication dynamics. The residual model was trained exclusively using communication data from Regime~A and subsequently evaluated without retraining on the three communication hops of Regime~B (Table~\ref{table1}). This zero-shot evaluation assesses whether the learned knowledge representation captures general communication behavior rather than overfitting to a specific orbital configuration.

The learned model generalized to hop1 without any retraining at all, reaching an root-mean-square-error (RMSE) of 0.317~dB and a calibration ratio of 1.051, which is close enough to the ideal value of 1.0 that the residual dynamics can fairly be called well calibrated under conditions the model never saw during training. Moreover, hop3 was the outlier, the calibration ratio of the original, unbounded predictive-variance architecture dropped to 0.055. Digging into the traces showed why a handful of outages on hop3 ran as long as 79.7~h, well past the 17.6~h longest gap the model had encountered in training. So this was extrapolation failure, not a flaw in the residual model itself.

The fix was simple, it required a cap on the predictive variance at the stationary variance of the residual process. Swapping the unbounded variance head for this sigmoid-bounded version brought the hop3 calibration ratio back up to 1.046, with RMSE and calibration on Regime~A and hop1 essentially unchanged (Table~\ref{table1}). In other words, keeping the uncertainty estimate physically sensible is what makes Operational Knowledge trustworthy during outages far longer than anything in the training data.

The hop2 needed a bit more attention. Even after bounding, its calibration ratio remained stick to 1.39, which looks less like instability and more like hop2's residual statistics simply being different from Regime~A's. Fine-tuning on just 15\% of hop2 data brought that down to 0.993, and the same light touch improved hop3 from 0.46 to 0.980 (Fig.~\ref{fig4c}), both comfortably within 2\% of the ideal calibration ratio. So a small amount of target-domain data goes a long way once the underlying representation already transfers.

Taken together, this points to a residual model that has learned something genuinely transferable about communication behavior, not something tied to one orbital configuration. Most of what it learned carried over directly, the rest was recoverable with a modest amount of new data. That matters for cislunar missions specifically, where you rarely get much representative training data up front and where conditions can shift meaningfully between missions.

\begin{table}[t]
\centering
\caption{Cross-regime transfer of the learned residual model: RMSE and calibration ratio (ideal value 1.0) for the unbounded and bounded predictive-variance architectures, zero-shot from Regime~A to each Regime~B hop, and after few-shot fine-tuning on 15\% of target-hop data.}
\label{table1}
\begin{tabular}{lccc}
\toprule
\textbf{Evaluation} & \textbf{RMSE (dB)} & \textbf{Calibration (unbounded)} & \textbf{Calibration (bounded)} \\
\midrule
Regime A (held-out) & 0.316 & 1.027 & 1.050 \\
hop1 (zero-shot)    & 0.317 & 1.051 & 1.070 \\
hop2 (zero-shot)    & 0.346 & 1.347 & 1.403 \\
hop3 (zero-shot)    & 0.312 & 0.055 & 1.046 \\
hop2 (few-shot, bounded)   & --    & --    & 0.993 \\
hop3 (few-shot, bounded)   & --    & --    & 0.980 \\
\bottomrule
\end{tabular}
\end{table}

\subsection{Decision-layer performance against baseline methods}

We benchmarked the KCDT against three simpler baselines, namely, a state-centric heuristic, a physics-only predictor, and a data-driven (Kalman-style) filter with no physics prior. In additions, we took a zero-delay oracle as an upper-bound reference, under nominal link conditions in Regime~A and Regime~B/hop2 (Table~\ref{table2}). On Regime~A, KCDT came out with the lowest RMSE of any non-oracle method (0.706~dB) and a calibration ratio of 0.652, against 2.371 for the state-centric baseline. This implies that the state-centric method is not just less accurate, it is actively overconfident about how accurate it is. Probability of Mission Completion (PMC) came out at 0.406 for both KCDT and the state-centric baseline under the entropy and freshness requirements used here, and at 0.000 for the physics-only method, whose fixed and non-adaptive uncertainty estimate never satisfied the freshness requirement no matter how accurate its point predictions were.

That pattern did not carry over to Regime~B/hop2. The analytic residual heuristic used for this comparison (defined in the Methods) had its correlation-time parameter fit to Regime~A and was carried over unchanged to hop2, the mismatch pushed its calibration ratio up to 29.5, well past the state-centric (0.176) and data-driven (0.662) baselines. PMC stayed high across every method on hop2 (0.979--0.987), simply because hop2's visibility is close to continuous under nominal conditions, so there is not much room for any method to fail. The upshot is that how much a fusion approach beats simpler baselines depends heavily on whether its residual model's parameters were actually fit to the regime it is being deployed in, which is the same lesson the cross-regime transfer results above already taught us.

\begin{table}[t]
\centering
\caption{Decision-layer comparison under nominal link conditions, Regime A and Regime B/hop2 ($N=20$ episodes per cell). PMC: Probability of Mission Completion. Oracle values shown for reference as a zero-delay upper bound, not a competing method.}
\label{table2}
\begin{tabular}{llccc}
\toprule
\textbf{Regime} & \textbf{Method} & \textbf{RMSE (dB)} & \textbf{Calibration} & \textbf{PMC} \\
\midrule
\multirow{5}{*}{A}
 & State-centric   & 1.513 & 2.371 & 0.406 \\
 & Physics-only    & 0.712 & 0.127 & 0.000 \\
 & Data-driven     & 1.511 & 0.107 & 0.398 \\
 & KCDT (proposed) & 0.706 & 0.652 & 0.406 \\
 & Oracle          & 0.000 & 0.000 & 0.406 \\
\midrule
\multirow{5}{*}{B / hop2}
 & State-centric   & 0.397 & 0.176  & 0.987 \\
 & Physics-only    & 2.488 & 1.558  & 0.000 \\
 & Data-driven     & 0.529 & 0.662  & 0.979 \\
 & KCDT (proposed) & 2.464 & 29.531 & 0.983 \\
 & Oracle          & 0.000 & 0.000  & 0.987 \\
\bottomrule
\end{tabular}
\end{table}

Figure~\ref{fig5} summarizes the decision-layer comparison and additionally includes a fixed, pre-planned transmission schedule that ignores the Operational Knowledge State entirely and transmits according to a predetermined schedule irrespective of communication conditions. Across both regimes, the fixed-schedule policy performs worst, highlighting the importance of adaptive knowledge-driven communication. It is also worth noting that the comparatively higher RMSE of KCDT in Regime~B does not arise from the knowledge-centric framework itself, but from the intentionally mismatched analytic residual model whose correlation-time parameter was calibrated using Regime~A and applied unchanged to hop2. This experiment was designed to demonstrate that knowledge-driven decision making remains fundamentally dependent on an appropriately calibrated uncertainty model, as also observed in the cross-regime transfer analysis. When the residual model is adapted to the target communication regime, the performance approaches that of the best-performing methods.

We also checked that maintaining the knowledge state actually scales the way the per-link independence design promises (see Methods). Wall-clock time per fusion step went from about 0.03~ms at $N=1$ tracked link to about 6~ms at $N=250$ links (Fig.~\ref{fig6a}), tracking the linear $O(|E(t)|)$ complexity rather than degrading super-linearly as the network grows. Since the state-centric baseline does not perform Bayesian uncertainty propagation or distributed knowledge fusion, its computational complexity is inherently lower. The purpose of Fig.~\ref{fig6a} is therefore not to compare computational cost across fundamentally different algorithms, but to verify that the proposed KCDT scales linearly with the number of active communication links, as predicted by the theoretical analysis.

\begin{figure}[t]
\centering
\subfloat[]{\includegraphics[width=0.9\textwidth]{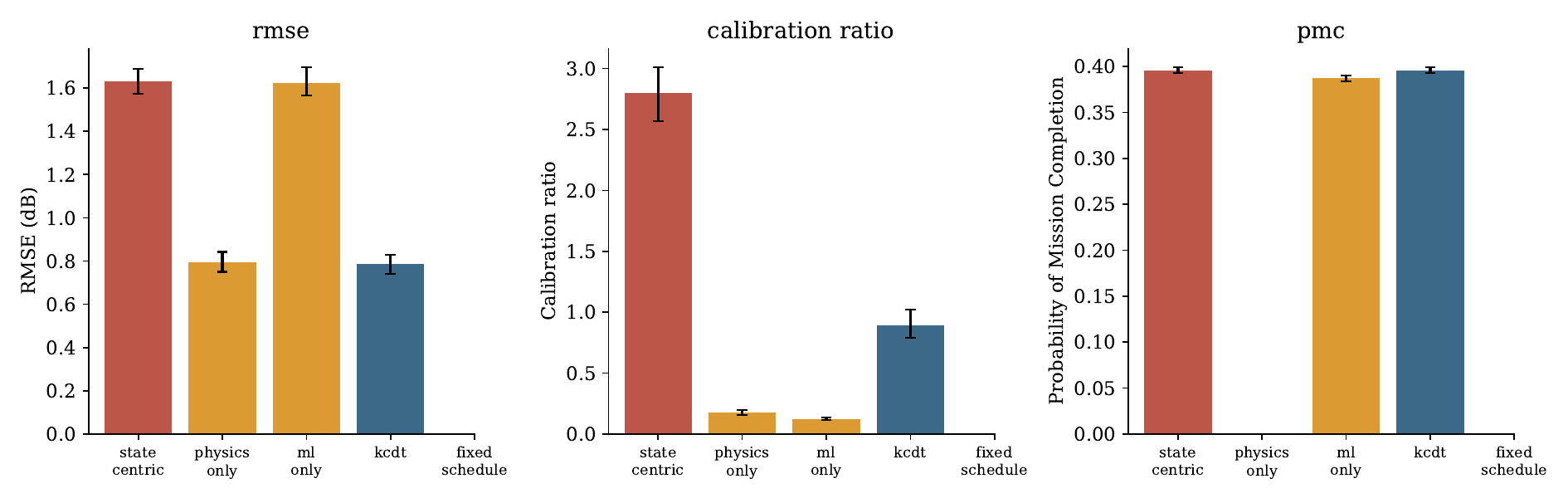}}
\hfil
\subfloat[]{\includegraphics[width=0.9\textwidth]{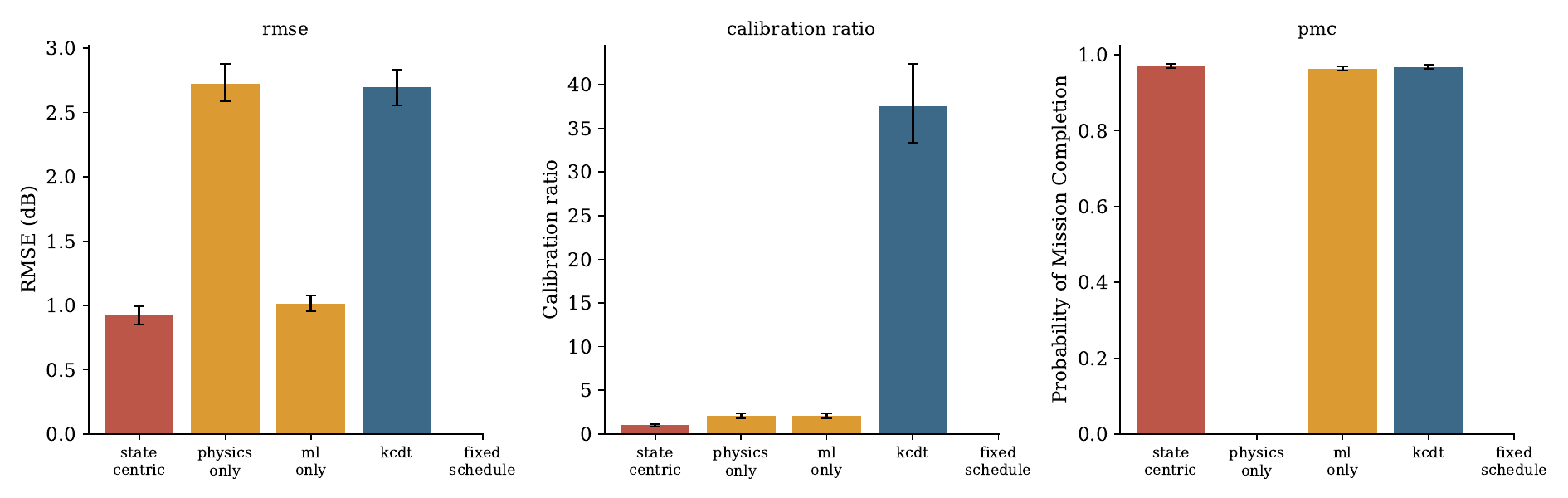}}
\hfil
\caption{\textbf{Decision-layer comparison across methods.} State-centric, physics-only, data-driven, KCDT, and fixed-schedule policies, under nominal conditions in (a)~Regime~A; (b)~Regime~B/hop2. The fixed-schedule policy, which transmits without reference to the knowledge state, is the weakest performer in both regimes.}
\label{fig5}
\end{figure}

\begin{figure}[t]
\centering
\subfloat[]
{\includegraphics[width=0.75\textwidth]{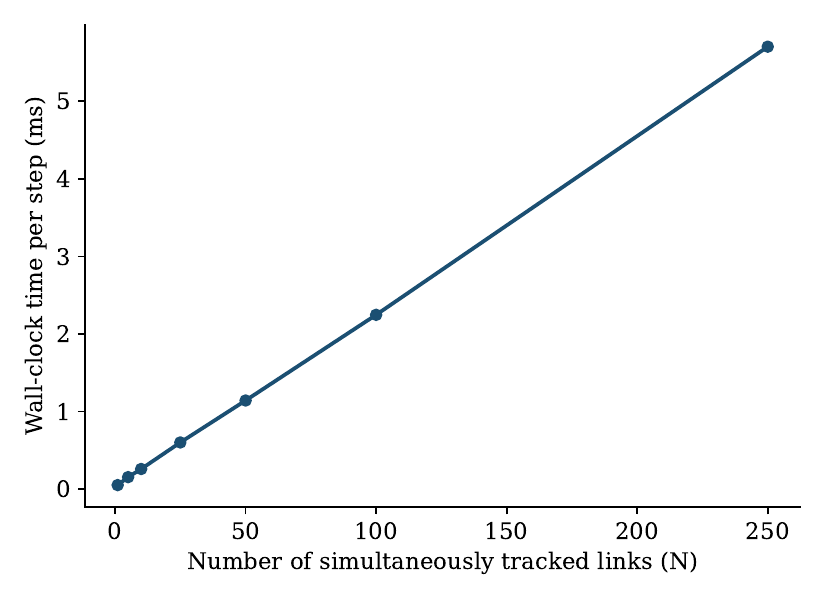}\label{fig6a}}
\hfill
\subfloat[]
{\includegraphics[width=0.9\textwidth]{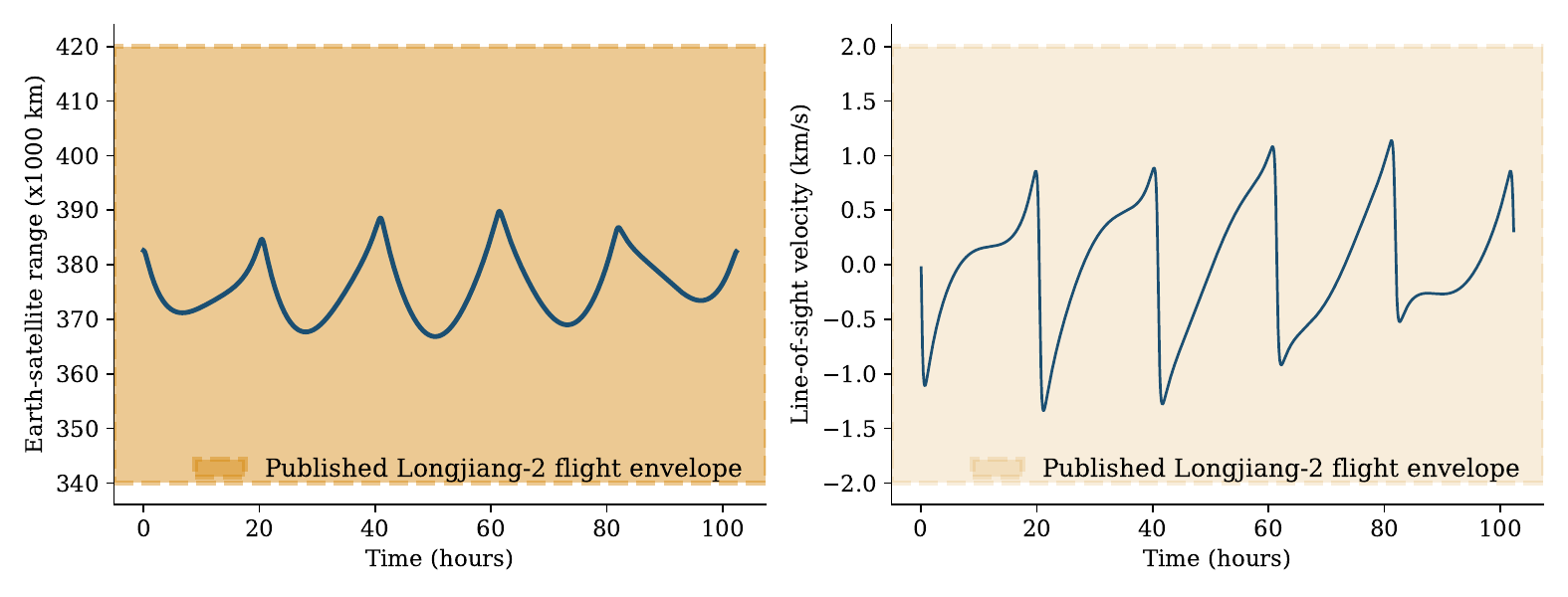}\label{fig6b}}
\caption{\textbf{Scalability of per-timestep knowledge fusion.} Wall-clock time per fusion step as a function of the number of simultaneously tracked links $N$, consistent with the linear $O(|E(t)|)$ complexity of the per-link independent update;  (b)~\textbf{Orbit fidelity against Longjiang-2's real flight envelope.} (a)~Simulated Earth--satellite range over time using Longjiang-2's actual achieved lunar orbit, against the paper's reported flight range (shaded). (b)~Simulated line-of-sight velocity against the paper's reported bound (shaded).}
\label{fig6}
\end{figure}

\subsection{Findings hold up against real flight data, not simulation alone}

Everything above was checked against our own simulated cislunar environment. A natural question is whether any of it survives contact with a real spacecraft, so we checked the same four properties, fusion, entropy behavior, transfer, and decision quality, against \emph{Longjiang-2}, a real lunar-orbit micro-satellite with a flown VHF/UHF radio and published flight-measured link data~\cite{wei_design_2020}. Where earlier checks compared against nominal Earth--Moon distance, this compares against an actual flown spacecraft's reported operational envelope.

The first check was simple geometric fidelity. Propagating Longjiang-2's actual achieved lunar orbit (357~km~$\times$~13{,}704~km altitude) and combining it with the Moon's own motion around Earth gave a simulated Earth--satellite range of 366{,}830--389{,}800~km and a LoS velocity of $-1.34$ to $1.14$~km/s, both falling inside the paper's reported flight envelope (340{,}000--420{,}000~km; about~$\pm2$~km/s) (Fig.~\ref{fig6b}). The second check reused this same orbit to test a real, documented anomaly: an in-flight $\sim$20~Hz jump in the satellite's temperature-compensated oscillator that the original authors report corrupted downlink packets by causing the receiver's phase-locked loop to lose lock. Injecting an equivalent unmodeled discontinuity into a simulated link showed the state-centric belief's entropy sitting completely flat through the anomaly, the same structural blindness established earlier, while the knowledge-centric belief's entropy rose by $+0.176$~nats during the anomaly and stayed elevated briefly afterward (Fig.~\ref{fig7a}), which is a smaller-scale echo of the paper's own account that the jump corrupted \emph{some}, not all, of the affected packets.

Fig.~\ref{fig7b} checks a different real quantity from the same mission: whether our link-budget model, fed physically reasonable small-CubeSat-class parameters at Longjiang-2's actual simulated range, lands anywhere near the paper's own reported operating thresholds. The paper's exact transmit power and antenna gains sit in supplementary tables we did not have access to, so this is a plausibility check rather than a replication, and the hardware parameters used are estimates, flagged explicitly as such. A small ground station gives an estimated link margin of 27.3~dB-Hz, inside the reported 17--33~dB-Hz range spanning the Joe Taylor, 4-tone G submode (JT4G) beacon, Gaussian Minimum Shift Keying (GMSK) telemetry, and GMSK telecommand links; a Dwingeloo-sized dish gives 50.3~dB-Hz, comfortably above every reported threshold, consistent with Dwingeloo providing the strongest signal of the four ground stations in the paper's own reception data. This does not show that our estimated hardware numbers match the real radio -- it shows that the link-budget model's physics produces sensible output at a real, externally reported operating point, which is the most a plausibility check can honestly claim.

The paper also provided two real instances of the Fusion Dominance principle established earlier, rather than synthetic constructions. Longjiang-2's downlink was observed arriving simultaneously through a direct path and a path reflected from the lunar surface, distinguished by different Doppler signatures. Treating these as two independent observations of the same underlying signal and fusing them reduced the RMSE from 1.259 to 1.092 relative to using the direct path alone (Fig.~\ref{fig8a}). In contrast, a conventional state-centric approach would process each observation independently and therefore cannot explicitly accumulate complementary information into a unified Operational Knowledge State. Similarly, the same downlink was received simultaneously at four real ground stations of visibly different signal quality, namely Dwingeloo, Wakayama, Shahe, and Harbin. Fusing observations from all four stations reduced the RMSE from 1.031 for the strongest individual station (Dwingeloo) to 0.847 (Fig.~\ref{fig8b}), providing a second real-world demonstration of the benefit of distributed knowledge fusion. Finally, feeding physically reasonable small-CubeSat-class parameters into our link-budget model at Longjiang-2's actual simulated range produced estimated link margins of 27.3~dB-Hz for a small ground station and 50.3~dB-Hz for a Dwingeloo-sized dish, both consistent with the reported operating thresholds of 17--33~dB-Hz across the mission's telemetry, beacon, and command links.

None of this is a precise replication, and it should not be read as one. The paper's exact transmit power, antenna gains, and orbital elements sit in supplementary tables we did not have access to, so the link-budget hardware parameters above are estimates, not the authors' own figures, and the reflection geometry behind the moonbounce fusion result is a simplified single-point approximation that reproduces the qualitative two-path structure but not the paper's reported near-cancellation in Doppler. What holds up is the pattern, not the digits, i.e, real orbital geometry lands inside a real reported envelope, a real anomaly produces exactly the flat-versus-responsive entropy contrast the framework predicts, and two real instances of multi-source reception both favor fusion over any single source, all without retraining or re-deriving the framework for this specific spacecraft.

\begin{figure}[t]
\centering
\subfloat[]{\includegraphics[width=0.9\textwidth]{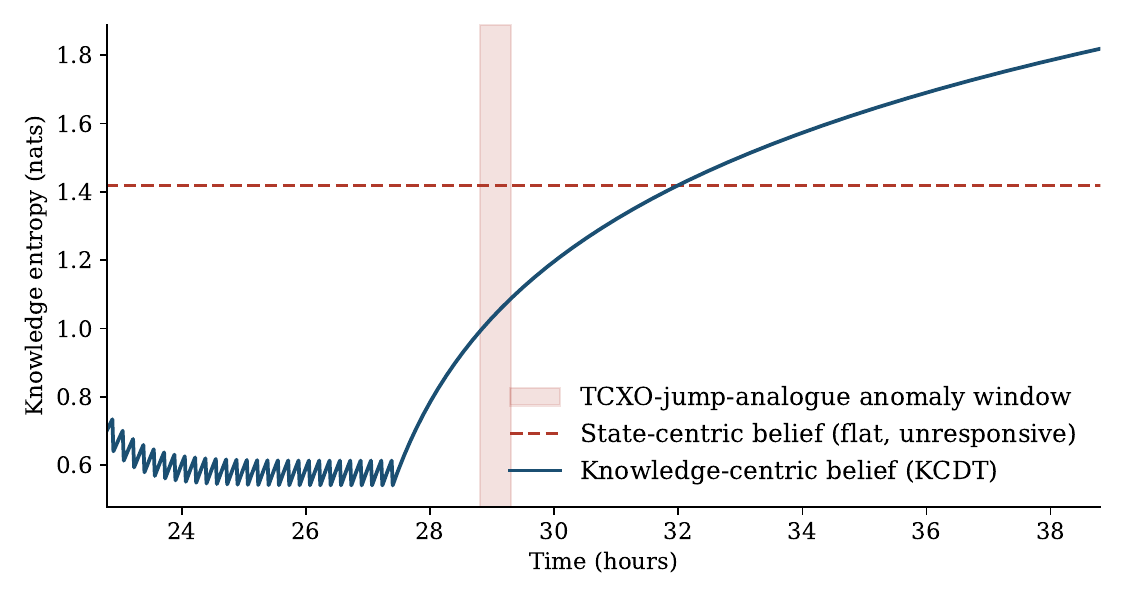}\label{fig7a}}
\hfil
\subfloat[]{\includegraphics[width=0.9\textwidth]{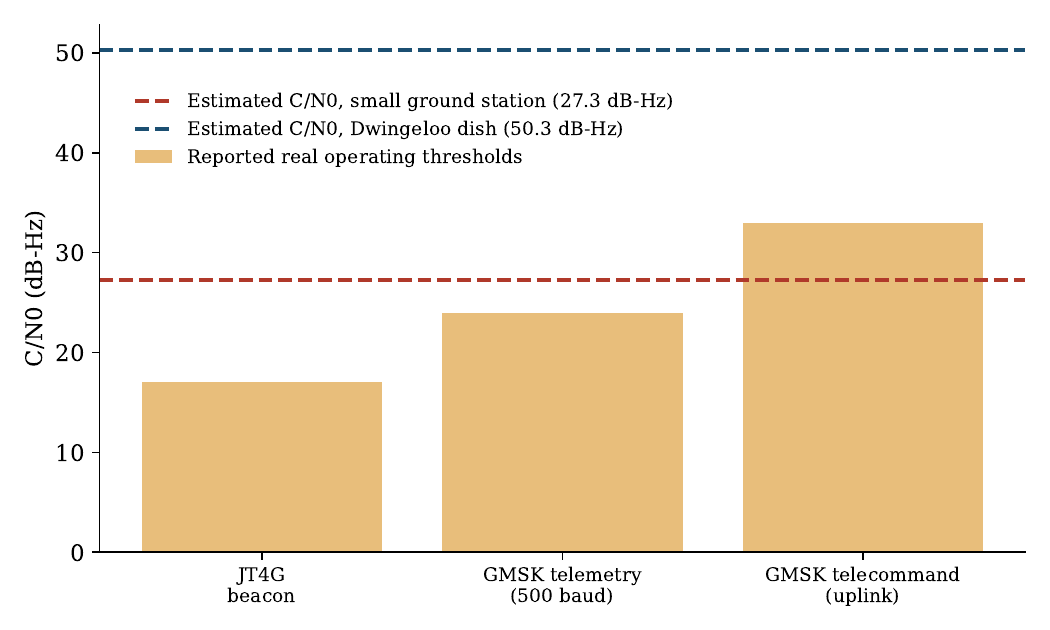}\label{fig7b}}
\caption{\textbf{Response to a real anomaly and link-budget plausibility.} (a)~Knowledge entropy for the knowledge-centric belief and the state-centric baseline around an injected discontinuity modelled on Longjiang-2's documented oscillator-jump anomaly. (b)~Estimated link margin for two ground-station classes against the paper's reported real operating thresholds.}
\label{fig7}
\end{figure}

\begin{figure}[t]
\centering
\subfloat[]{\includegraphics[width=0.9\textwidth]{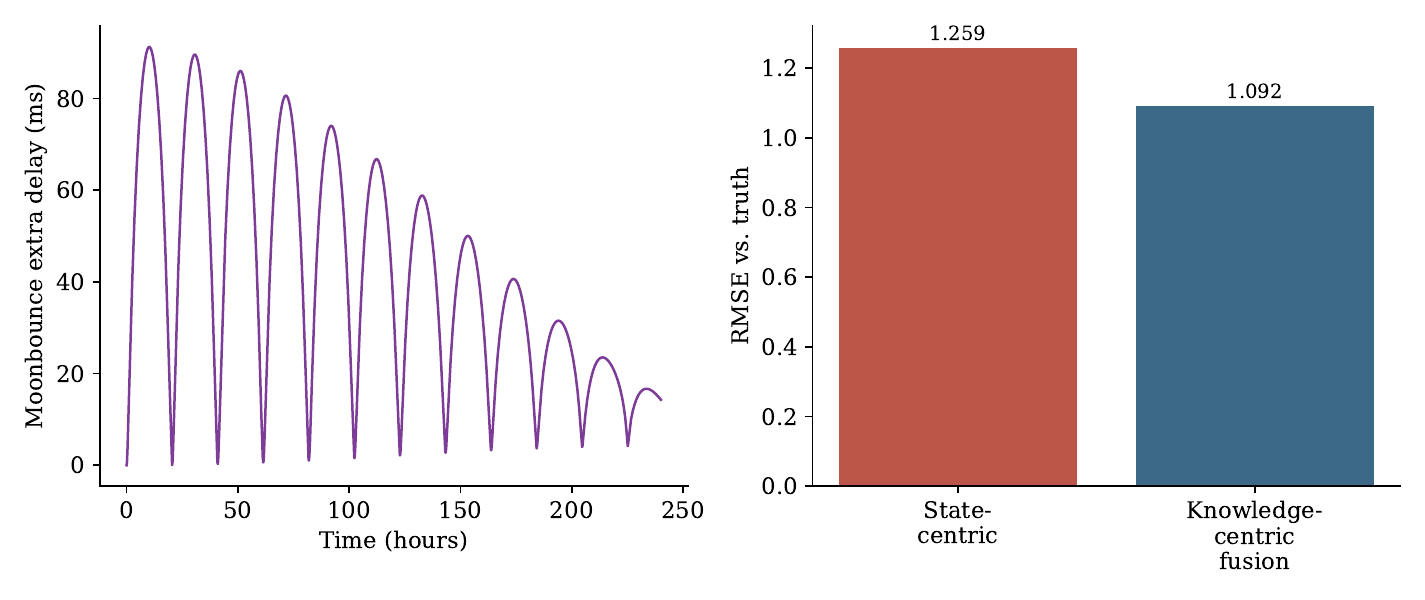}\label{fig8a}}
\hfil
\subfloat[]{\includegraphics[width=0.9\textwidth]{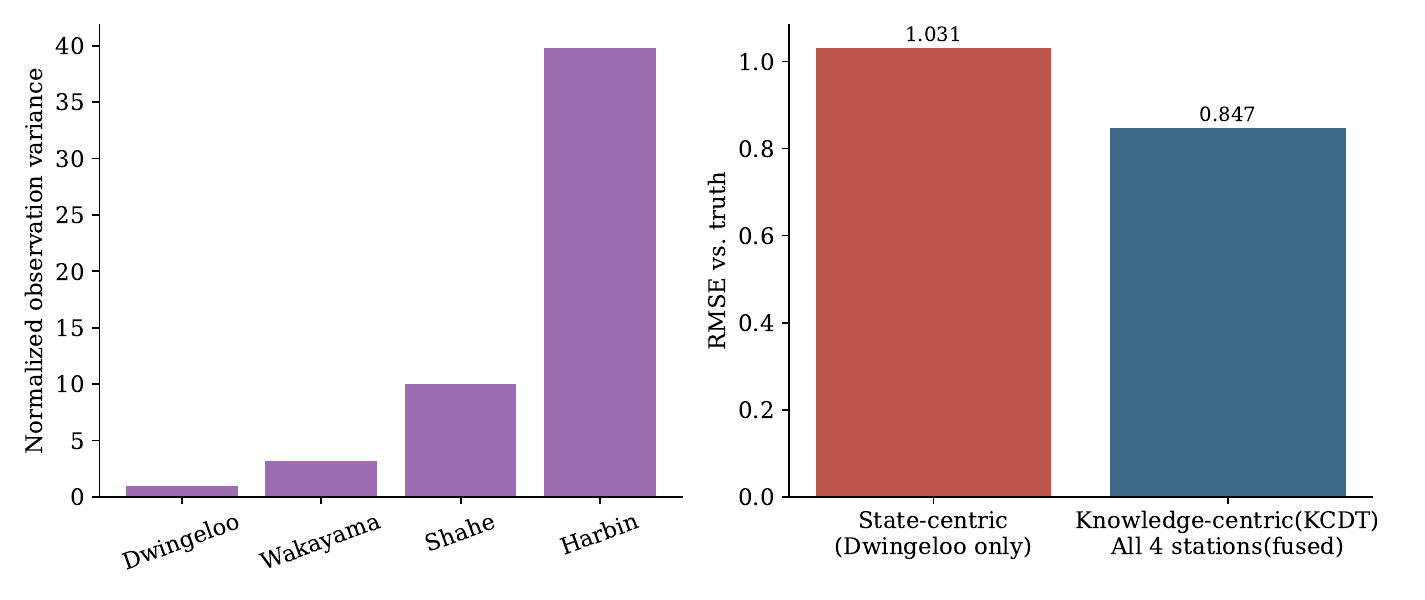}\label{fig8b}}
\caption{\textbf{Two physically real fusion scenarios from Longjiang-2.} (a)~Direct and moonbounce-reflected paths of the same downlink signal, fused. (b)~Four real ground stations of unequal signal quality, fused against the single strongest station alone.}
\label{fig8}
\end{figure}

\section{Discussion}

Across the four analyses reported here, and a fifth check against a real flown spacecraft, a single pattern keeps reappearing: representing what is known, and how confidently, does work that a point estimate cannot. Fusion is not just a variance-reduction trick; in the Fusion Necessity scenario it was the difference between a requirement that no single source could ever satisfy and one that a fused belief satisfied every time. Entropy tracked realistic outages in a way a fixed synchronization assumption structurally cannot, growing exactly where it should and contracting, though never to zero, when new evidence arrived. The learned residual model carried over across orbital regimes with only a light touch of fine-tuning, provided its predictive uncertainty was kept physically bounded, which turned an extrapolation failure on the longest outages into a non-issue. The decision layer's advantage over simpler baselines also proved to be conditional rather than automatic: it appeared clearly in Regime~A, became marginal under hop2's near-continuous visibility where there was little opportunity for any method to fail, and reversed when the residual model's assumptions were intentionally mismatched to the target regime. This dependence of knowledge-centric performance on the accuracy of its underlying uncertainty model is particularly important, because it challenges the tempting narrative that knowledge-centric fusion is always superior. Instead, the results show that it provides the greatest benefit when its uncertainty estimates remain well calibrated, whereas inaccurate uncertainty modeling can degrade performance. Identifying and adapting to such model mismatch is therefore essential for reliable deployment.

Some limitations follow directly from how the evaluation was built, and the Gateway's orbital model is the one worth walking through rather than just flagging. As stated in the Methods, Regime~B's Lunar Gateway is propagated with a reduced-order approximation of its Near Rectilinear Halo Orbit (NRHO), not a full Circular Restricted Three-Body Problem (CR3BP) integration. That choice has a real advantage, as it is fast, and speed matters when a result has to come from thousands of Monte Carlo episodes rather than a single flight trajectory, and it still reproduces the orbital period and the general shape of Gateway visibility windows well enough to drive realistic intermittent connectivity. However, it has a real cost attached to it. A true NRHO carries its own unstable manifold structure, needs periodic station-keeping burns to stay on it, and produces occultation timing a reduced-order stand-in can only approximate. So the exact timestamps, exact outage durations, and exact periodicity reported here should be read as representative of the \emph{kind} of intermittency a Gateway-class orbit produces, not as a forecast for any specific mission's actual schedule.

However, this simplification is sufficient to demonstrate the significance of the proposed framework at this stage. The theoretical properties established in this work, namely that distributed knowledge fusion reduces uncertainty while uncertainty increases during prolonged periods without new observations, hold for any covariance sequence satisfying the stated assumptions, regardless of where that sequence originates. Consequently, the entropy, freshness, and calibration results presented in the Results section depend primarily on the statistical characteristics of intermittent visibility, long-tailed outages, and occasional extended communication gaps, rather than on flight-accurate orbital geometry down to the meter. Future studies can readily replace this approximation with a full CR3BP propagator or with mission-specific ephemerides generated using NASA's General Mission Analysis Tool (GMAT), without requiring any modification to the proposed KCDT framework. Moreover, the Longjiang-2 comparison is particularly valuable because it evaluates the framework using a real spacecraft orbit and observed outage statistics, which cannot be obtained from a purely simulated Gateway model.

The Longjiang-2 comparison carries its own limitations, which sit alongside the ones above rather than apart from them: the paper's exact hardware parameters were not available to us, so the link-budget figures are estimates rather than the authors' reported values, and the moonbounce geometry is a simplified single-point reflection that captures the two-path structure without reproducing the reported Doppler near-cancellation. What these checks add is not precision but an existence proof, that the same fusion, entropy, and calibration behavior established in simulation also shows up against a real spacecraft's real data, without any part of the framework being re-derived for that spacecraft specifically.

The framework itself is a starting point rather than a closed solution. Formalizing knowledge entropy and knowledge freshness into provably convergent update rules, extending the single-node reasoning developed here to distributed fusion across many communicating agents operating asynchronously, and establishing how much a foundation-model-based reasoning layer can be trusted under the same delay and incompleteness are each open problems that this framework exposes rather than resolves. More broadly, we expect the shift from state-centric to knowledge-centric reasoning to matter beyond Earth--Moon communications, wherever autonomous systems, such as deep-sea exploration, disaster-response networks, distributed robotic fleets, must act on what they currently know rather than on what can be known. We hope this work opens a research direction in which operational knowledge, its representation, its quantification, and its lifecycle, become as central an object of communication theory as the bit once was.

\section{Methods}

\subsection{Cislunar Communication Environment and Dynamic Graph Representation}

We represent the cislunar communication infrastructure as a time-varying graph,

\begin{equation}
G(t)=\left(V(t),E(t)\right),
\end{equation}
where $V(t)$ is the set of communication nodes at time $t$, comprising of terrestrial ground stations, relay satellites, the Lunar Gateway, robotic explorers, crewed lunar habitats, and $E(t)\subseteq V(t)\times V(t)$ is the set of active communication links. Node positions change with orbital motion, so both the topology of $G(t)$ and the physical characteristics of each link change with it.

Earth-orbiting relay satellites (Regime A) are propagated with the classical two-body Keplerian model~\cite{vallado2013}. The Lunar Gateway (Regime B) is propagated with a reduced-order approximation of its NRHO, since two-body dynamics do not reproduce NRHO motion~\cite{Ferrari2024} and a full CR3BP integration was beyond the scope of this validation; we return to this simplification.

For every ordered node pair $(i,j)$, LoS visibility follows from a geometric sphere-occlusion test against the Earth or Moon, giving a binary indicator

\begin{equation}
v_{ij}(t)\in\{0,1\},
\end{equation}
where $v_{ij}(t)=1$ means an unobstructed LoS path exists between nodes $i$ and $j$, and $v_{ij}(t)=0$ means the link is blocked by planetary occlusion. A communication edge exists only when

\begin{equation}
v_{ij}(t)=1,
\end{equation}
so communication opportunities follow directly from orbital geometry, not from an assumed schedule.

Every active link $(i,j)$ carries a \emph{true communication state}, the physical condition of the channel at time $t$, and can be written as

\begin{equation}
s_{ij}(t)
=
\left(
\mathrm{SNR}_{ij}(t),
f^{D}_{ij}(t),
\tau^{p}_{ij}(t)
\right),
\label{eq:state}
\end{equation}
where $\mathrm{SNR}_{ij}(t)$ is the received signal-to-noise ratio, $f^{D}_{ij}(t)$ is the Doppler shift from relative radial motion, and $\tau^{p}_{ij}(t)$ is the one-way propagation delay. 

The one-way delay follows directly from separation, which can be defined as

\begin{equation}
\tau^{p}_{ij}(t)
=
\frac{R_{ij}(t)}{c},
\end{equation}
where

\begin{equation}
R_{ij}(t)
=
\left\|
\mathbf{r}_{i}(t)
-
\mathbf{r}_{j}(t)
\right\|
\end{equation}
is the distance between position vectors $\mathbf{r}_{i}(t)$ and $\mathbf{r}_{j}(t)$, and $c$ is the speed of light.

Equation~(\ref{eq:state}) is the \emph{true} state of a link. No node ever has direct access to it. Observations arrive after a long propagation delay, go missing during outages, and carry their own measurement noise. Every node's picture of the network is therefore partial and out of date by construction, not by some correctable flaw in the sensing.

\subsection{Operational Knowledge State: Integrating Communication Physics, Learning, and Uncertainty}

As the true state is never directly available, autonomous decisions have to run on a continuously updated estimate of it rather than on the state itself. We call this estimate the \emph{Operational Knowledge State}. It combines a deterministic physics model with a learned residual and Bayesian uncertainty quantification, and its goal is not to synchronize with the physical system, no DT can do that under these delays, but to stay as informative as the available observations allow.

The true communication state introduced in Eq.~(\ref{eq:state}) can be decomposed into a deterministic component governed by communication physics and a residual component representing effects not captured by the analytical model. Accordingly, we can write

\begin{equation}
s_{ij}(t)
=
f_{\mathrm{phys}}
\!\left(
R_{ij}(t),
\dot{R}_{ij}(t)
\right)
+
\epsilon_{ij}(t),
\label{eq:decomposition}
\end{equation}
where $f_{\mathrm{phys}}(\cdot)$ denotes the deterministic communication model, $R_{ij}(t)$ is the instantaneous separation between nodes $i$ and $j$, $\dot{R}_{ij}(t)=dR_{ij}(t)/dt$ is the corresponding range rate, and $\epsilon_{ij}(t)$ represents the residual communication dynamics not explained by the deterministic model, including propagation impairments, synchronization errors, hardware imperfections, and other stochastic effects.

The deterministic part follows from the propagated trajectories. Assuming free-space LoS propagation, received SNR follows the Friis equation in logarithmic form~\cite{friis_note_1946},

\begin{equation}
\mathrm{SNR}^{\mathrm{phys}}_{ij}(t)
=
P_{\mathrm{tx}}
+
G_{\mathrm{tx}}
+
G_{\mathrm{rx}}
-
20\log_{10}
\left(
\frac{4\pi R_{ij}(t)f_c}{c}
\right)
-
N_0,
\label{eq:friis}
\end{equation}
where $P_{\mathrm{tx}}$ is transmit power, $G_{\mathrm{tx}}$ and $G_{\mathrm{rx}}$ are the transmit and receive antenna gains, $f_c$ is carrier frequency, and $N_0$ is the noise floor. The second last factor in \eqref{eq:friis} is free-space path loss at the current range $R_{ij}(t)$.

Expected Doppler shift follows directly from the first-order Doppler approximation for relative radial motion, which can be written as

\begin{equation}
f^{D,\mathrm{phys}}_{ij}(t)
=
-
f_c
\frac{\dot{R}_{ij}(t)}{c},
\end{equation}
where $f_c$ is the carrier frequency. The negative sign follows the conventional Doppler sign convention, whereby a positive range rate (nodes moving apart) produces a negative frequency shift, while a negative range rate (nodes approaching one another) produces a positive frequency shift. 
A neural network estimates that residual as follows

\begin{equation}
\hat{\epsilon}_{ij}(t)
=
r_{\theta}
\left(
h_{ij}(t)
\right),
\label{eq:residual}
\end{equation}
where $r_{\theta}(\cdot)$ is the residual network with weights $\theta$, and $h_{ij}(t)$ is the observation history for link $(i,j)$, defined as

\begin{equation}
h_{ij}(t)
=
\left\{
z_{ij}(t_1),
z_{ij}(t_2),
\ldots,
z_{ij}(t_n)
\right\},
\end{equation}
with $z_{ij}(t_k)$ the observation received at time $t_k\le t$ every delayed SNR, Doppler, and delay measurement collected so far.

The network does not stop at a point estimate. It also predicts how much to trust it, which can be defined as

\begin{equation}
\Sigma_{r,ij}(t)
=
g_{\theta}
\left(
h_{ij}(t)
\right),
\label{eq:residualcov}
\end{equation}
where $g_{\theta}(\cdot)$ is the network's uncertainty head. Small $\Sigma_{r,ij}(t)$ means high confidence, large $\Sigma_{r,ij}(t)$ means the observations feeding it were sparse, noisy, or stale.

Physics and residual combine into the observation actually available at the receiver, which can be illustrated as

\begin{equation}
z_{ij}(t)
=
f_{\mathrm{phys}}
\left(
R_{ij}(t),
\dot{R}_{ij}(t)
\right)
+
\hat{\epsilon}_{ij}(t),
\label{eq:observation}
\end{equation}
where $z_{ij}(t)$, unlike the inaccessible true state $s_{ij}(t)$, is what estimation actually has to work with. Its uncertainty $\Sigma_{r,ij}(t)$ is fed directly into the Bayesian update below as the measurement noise.

Since $z_{ij}(t)$ is delayed and imperfect, decisions have to run on a posterior over the true state given everything observed so far. we can define the observation history of link $(i,j)$ as

\begin{equation}
\mathcal{O}_{1:t}
=
\left\{
z_{ij}(t_1),
z_{ij}(t_2),
\ldots,
z_{ij}(t_n)
\right\}
\end{equation}
so the posterior belief can be written as 

\begin{equation}
b_{ij}(t)
=
p
\left(
s_{ij}(t)
\mid
\mathcal{O}_{1:t}
\right),
\label{eq:belief}
\end{equation}
i.e., what the system currently knows about the link, not the link's actual physical state.

Following standard practice in Bayesian filtering~\cite{thrun2005probabilistic,barshalom}, we approximate this posterior as Gaussian, which can be written as

\begin{equation}
b_{ij}(t)
\approx
\mathcal{N}
\left(
\mu_{ij}(t),
\Sigma_{ij}(t)
\right),
\label{eq:gaussianbelief}
\end{equation}

with mean

\begin{equation}
\mu_{ij}(t)
=
\left[
\widehat{\mathrm{SNR}}_{ij}(t),
\widehat{f}^{D}_{ij}(t),
\widehat{\tau}^{p}_{ij}(t)
\right]^{\mathrm T},
\label{eq:posteriormean}
\end{equation}
where $\widehat{\mathrm{SNR}}_{ij}(t)$, $\widehat{f}^{D}_{ij}(t)$, and $\widehat{\tau}^{p}_{ij}(t)$ denote the posterior estimates of the received SNR, Doppler frequency shift, and one-way propagation delay, respectively.

The corresponding posterior covariance can be illustrated as

\begin{equation}
\Sigma_{ij}(t)
=
\mathbb{E}_{b_{ij}}
\left[
\left(
s_{ij}(t)
-
\mu_{ij}(t)
\right)
\left(
s_{ij}(t)
-
\mu_{ij}(t)
\right)^{\mathrm T}
\right],
\label{eq:posteriorcov}
\end{equation}
whose diagonal elements represent the estimation variances of the individual communication-state variables, while the off-diagonal elements capture their statistical correlations. Together, the pair $(\mu_{ij}(t),\Sigma_{ij}(t))$ provides the posterior belief associated with communication link $(i,j)$, where $\mu_{ij}(t)$ represents the estimated communication state, while $\Sigma_{ij}(t)$ determines the confidence in that estimate through its associated uncertainty. Smaller values of $\Sigma_{ij}(t)$ correspond to higher confidence, whereas larger values indicate greater uncertainty.

This belief does not sit still between observations. Orbital motion keeps changing the environment even when nothing new has been measured, so the belief goes through a prediction step and then, whenever a fresh observation lands, it goes through the correction step.

Prediction propagates the mean through the physics model, which can be defined as

\begin{equation}
\mu^{-}_{ij}(t)
=
f_{\mathrm{phys}}
\left(
\mu_{ij}(t_{\mathrm{last}}),
t-t_{\mathrm{last}}
\right),
\label{eq:predictionmean}
\end{equation}
where $t_{\mathrm{last}}$ is when the last observation was fused. With no new evidence to anchor it, uncertainty grows as follows

\begin{equation}
\Sigma^{-}_{ij}(t)
=
\Sigma_{ij}(t_{\mathrm{last}})
+
Q_{\mathrm{proc}}
\left(
t-t_{\mathrm{last}}
\right),
\label{eq:predictioncov}
\end{equation}
where $Q_{\mathrm{proc}}$ is a process-noise term accounting for orbital dynamics and everything else drifting in the absence of new data.

When a new observation arrives, its uncertainty
$\Sigma_{r,ij}(t)=g_\theta(h_{ij}(t))$
(Eq.~\ref{eq:residualcov})
is treated as the measurement-noise covariance in a standard Bayesian
precision-weighted update~\cite{thrun2005probabilistic,barshalom},
which can be written as

\begin{equation}
\Sigma^{-1}_{ij}(t)
=
\left(
\Sigma^{-}_{ij}(t)
\right)^{-1}
+
\left(
\Sigma_{r,ij}(t)
\right)^{-1},
\label{eq:covupdate}
\end{equation}

\begin{equation}
\mu_{ij}(t)
=
\Sigma_{ij}(t)
\left[
\left(
\Sigma^{-}_{ij}(t)
\right)^{-1}
\mu^{-}_{ij}(t)
+
\left(
\Sigma_{r,ij}(t)
\right)^{-1}
z_{ij}(t)
\right].
\label{eq:meanupdate}
\end{equation}
A confident observation pulls the belief toward it, whereas a noisy one barely moves it. This is what lets delayed, heterogeneous observations from different assets combine into one coherent estimate despite intermittent connectivity.

Collecting every link's updated belief gives the Operational Knowledge State, which can be defined as

\begin{equation}
\mathcal{K}(t)
=
\left(
G(t),
\left\{
b_{ij}(t)
\right\}_{(i,j)\in E(t)}
\right),
\label{eq:knowledge}
\end{equation}
where $\mathcal{K}(t)$ denotes the Operational Knowledge State of the
communication network at time $t$, comprising the current communication
graph $G(t)$ together with the posterior belief associated with every active
communication link. Thus, the network topology is represented explicitly by
$G(t)$, whereas the communication state of each active link is represented
probabilistically through its corresponding belief $b_{ij}(t)$.

During an outage, Eq.~(\ref{eq:predictionmean})--(\ref{eq:predictioncov}) carry the belief forward on physics alone while its uncertainty climbs. When an observation arrives, Eq.~(\ref{eq:covupdate})--(\ref{eq:meanupdate}) fold it back in, weighted by how much it should be trusted. Autonomous communication, networking, and mission decisions run on $\mathcal{K}(t)$ throughout and never on a raw, instantaneous channel reading.

\subsection{Theoretical Properties of the Operational Knowledge State}

The previous subsection showed how the KCDT estimates a link's state. A separate question is whether that estimate is trustworthy enough to build decisions on. We establish two properties here. First, fusing knowledge from more sources never makes the belief worse, and under independent observations it strictly helps. Second, without new observations, uncertainty only grows. These two facts are the theoretical backbone of everything that follows.

\begin{proposition}[Knowledge Fusion Dominance]
Consider link $(i,j)$ observed by $N$ independent communication assets, with posterior beliefs

\[
b_{ij}^{(k)}(t)
\sim
\mathcal N
\left(
\mu_{ij}^{(k)}(t),
\Sigma_{ij}^{(k)}(t)
\right),
\qquad
k=1,\ldots,N.
\]

Fusing these beliefs by Bayesian precision weighting gives

\begin{equation}
\Sigma^{-1}_{ij,\mathrm{fusion}}(t)
=
\sum_{k=1}^{N}
\left(
\Sigma_{ij}^{(k)}(t)
\right)^{-1},
\label{eq:fusioncov}
\end{equation}
and

\begin{equation}
\Sigma_{ij,\mathrm{fusion}}(t)
\preceq
\Sigma_{ij}^{(k)}(t),
\qquad
\forall k,
\label{eq:fusionineq}
\end{equation}
where $\preceq$ is positive-semidefinite ordering. Equality holds only when a single source carries all the information.
\end{proposition}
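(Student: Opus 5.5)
The argument has two parts. First, I derive the precision-additive form of Eq.~(\ref{eq:fusioncov}) from Bayes' rule. Then I obtain the ordering in Eq.~(\ref{eq:fusionineq}) from the fact that matrix inversion reverses the Loewner order on positive-definite matrices.

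For the first part, I interpret the $N$ independent assets as supplying Gaussian likelihood terms for the common state $s_{ij}(t)$, each proportional to $\exp\{-\tfrac12 (s-\mu^{(k)})^{\mathrm T}(\Sigma^{(k)})^{-1}(s-\mu^{(k)})\}$. The independence assumption lets the fused density factorize as their product, taken with a flat or already-absorbed prior, so that no common information is double-counted. The sum of the quadratic forms is again a quadratic form in $s$. Completing the square shows that its Hessian is $\sum_k(\Sigma^{(k)})^{-1}$ and its mean is the precision-weighted average, exactly as in Eqs.~(\ref{eq:covupdate})--(\ref{eq:meanupdate}) iterated $N$ times. This yields Eq.~(\ref{eq:fusioncov}). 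It also shows that the fused precision is positive definite, and hence invertible, because it is a sum of positive-definite matrices.

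For the ordering, fix any $k$ and write $\Lambda_k=(\Sigma^{(k)})^{-1}$ and $\Lambda=\sum_m\Lambda_m$. Then $\Lambda-\Lambda_k=\sum_{m\neq k}\Lambda_m\succeq 0$, so $\Lambda\succeq\Lambda_k\succ0$. The key lemma is that $A\succeq B\succ 0$ implies $B^{-1}\succeq A^{-1}$. I prove it by congruence. Set $C=B^{-1/2}AB^{-1/2}$; then $A\succeq B$ gives $C\succeq I$. All eigenvalues of $C$ are therefore at least $1$, so $C^{-1}\preceq I$, which reads $B^{1/2}A^{-1}B^{1/2}\preceq I$. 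Conjugating by $B^{-1/2}$ gives $A^{-1}\preceq B^{-1}$. Applying the lemma with $A=\Lambda$ and $B=\Lambda_k$ gives $\Sigma_{\mathrm{fusion}}\preceq\Sigma^{(k)}$ for every $k$.

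The main obstacle is the equality clause, which must be given a precise meaning. Since every $\Lambda_m$ is positive definite, equality $\Sigma_{\mathrm{fusion}}=\Sigma^{(k)}$ forces $\sum_{m\ne k}\Lambda_m=0$. That is impossible with $N\ge2$ genuinely positive-definite covariances, so the inequality is then strict: $\Sigma^{(k)}-\Sigma_{\mathrm{fusion}}\succ0$. To obtain this strictness I run the congruence argument with $C\succ I$ and strict eigenvalue bounds. I will therefore read ``a single source carries all the information'' in a limiting sense. Either $N=1$, or the other sources have vanishing precision ($\Lambda_m\to0$, i.e.\ uninformative observations). If one allows singular (PSD) precisions for partially informative sensors, equality holds exactly when $\sum_{m\neq k}\Lambda_m=0$. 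I would state the proposition's equality case in that form.

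Finally, as a corollary for the Knowledge Entropy, I note that $\log\det$ is monotone under $\preceq$. It follows that the entropy $\tfrac12\log\det(2\pi e\,\Sigma)$ of the fused belief is no larger than that of any individual belief, and strictly smaller in the non-degenerate case.
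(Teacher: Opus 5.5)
Your proof is correct and takes the same route as the paper. In both, the fused precision $\sum_m \Lambda_m$ dominates each $\Lambda_k$ in the Loewner order, and matrix inversion reverses that order. You go further than the paper in three places: you derive the precision-sum formula from a product of Gaussian likelihoods (which the paper simply takes as the definition of fusion), you prove the order-reversal lemma by congruence where the paper only cites it, and you make the equality clause precise, namely $\Sigma^{(k)}-\Sigma_{\mathrm{fusion}}\succ 0$ whenever $N\ge 2$ with positive-definite covariances, a point the paper merely asserts.
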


\begin{proof}
Each $\Sigma_{ij}^{(k)}(t)$ is symmetric positive definite, so $(\Sigma_{ij}^{(k)}(t))^{-1}\succ0$ for every $k$. Equation~(\ref{eq:fusioncov}) sums these positive-definite precisions, so

\[
\Sigma^{-1}_{ij,\mathrm{fusion}}(t)
\succeq
\left(
\Sigma_{ij}^{(k)}(t)
\right)^{-1},
\qquad
\forall k.
\]
Matrix inversion reverses the Loewner order on positive-definite matrices~\cite{horn13, bhatia}, so

\[
\Sigma_{ij,\mathrm{fusion}}(t)
\preceq
\Sigma_{ij}^{(k)}(t),
\qquad
\forall k,
\]
which is Bayesian fusion never adding uncertainty. When more than one independent observation contributes, the inequality is strict, i.e., the fused belief beats every individual one.
\end{proof}

This makes it clear that as more independent assets contribute observations, the fused belief is never less informative than any one of them alone. The DT accumulates knowledge and does not just average whatever comes in.

\textbf{Corollary 1 (Entropy Reduction through Knowledge Fusion).}
The differential entropy of a Gaussian belief can be written as

\begin{equation}
H_{ij}(t)
=
\frac{1}{2}
\log
\left(
(2\pi e)^d
\left|
\Sigma_{ij}(t)
\right|
\right),
\label{eq:entropycorr}
\end{equation}
with $d$ the dimension of the state. Equation~(\ref{eq:fusionineq}) then gives

\begin{equation}
H_{ij,\mathrm{fusion}}(t)
\le
H_{ij}^{(k)}(t),
\qquad
\forall k.
\end{equation}
Fusion never increases entropy, and strictly reduces it whenever an independent, informative observation is added.

\begin{proposition}[Knowledge Aging]
\label{prop:aging}
Suppose no observations arrive over $[t_0,t]$. Under the prediction model of Eq.~(\ref{eq:predictionmean})--(\ref{eq:predictioncov}), the covariance evolves as

\begin{equation}
\Sigma^{-}_{ij}(t)
=
\Sigma_{ij}(t_0)
+
Q_{\mathrm{proc}}
(t-t_0),
\label{eq:aging}
\end{equation}
with $Q_{\mathrm{proc}}\succeq0$. Then entropy in \eqref{eq:entropycorr} does not decrease.
\end{proposition}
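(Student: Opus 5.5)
Since the entropy in Eq.~(\ref{eq:entropycorr}) depends on the covariance only through $\log|\Sigma_{ij}(t)|$, the claim reduces to monotonicity of the determinant along the path $\Sigma^{-}_{ij}(t)=\Sigma_{ij}(t_0)+Q_{\mathrm{proc}}(t-t_0)$. We read $Q_{\mathrm{proc}}(t-t_0)$ as a positive semidefinite matrix that is nondecreasing in elapsed time, for instance $Q_{\mathrm{proc}}\cdot(t-t_0)$ with a constant $Q_{\mathrm{proc}}\succeq0$. Then for $t_0\le s\le t$ we have $\Sigma^{-}_{ij}(t)-\Sigma^{-}_{ij}(s)\succeq0$. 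This gives the Loewner ordering $\Sigma^{-}_{ij}(s)\preceq\Sigma^{-}_{ij}(t)$, and in particular $\Sigma_{ij}(t_0)\preceq\Sigma^{-}_{ij}(t)$, which is the mirror image of the ordering used in the proof of the Knowledge Fusion Dominance proposition.

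The key step converts this ordering into a determinant inequality. Write $A=\Sigma^{-}_{ij}(s)\succ0$ and $B=\Sigma^{-}_{ij}(t)=A+P$ with $P\succeq0$. Then
\[
|B| = |A|\,\bigl|I + A^{-1/2} P A^{-1/2}\bigr| .
\]
The matrix $A^{-1/2}PA^{-1/2}$ is positive semidefinite, so its eigenvalues $\lambda_\ell$ are nonnegative. Hence $|I+A^{-1/2}PA^{-1/2}|=\prod_\ell(1+\lambda_\ell)\ge1$, and therefore $|B|\ge|A|$. Because $\log$ is increasing, this gives $H_{ij}(t)\ge H_{ij}(s)$, so the entropy is nondecreasing on $[t_0,t]$. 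The same computation shows the increase is strict as soon as $P\neq0$, since then some $\lambda_\ell>0$. This matches the paper's narrative that uncertainty strictly grows during outages.

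The obstacle is not the linear algebra but two modelling points that must be fixed. First, $Q_{\mathrm{proc}}(\cdot)$ must be monotone in elapsed time. I would make this an explicit assumption, stated in the form $Q_{\mathrm{proc}}(\tau)=Q_{\mathrm{proc}}\,\tau$. Second, the mean propagation in Eq.~(\ref{eq:predictionmean}) must not rescale the covariance. Eq.~(\ref{eq:predictioncov}) has no Jacobian term, so the covariance evolves additively as written and the argument goes through directly. I would add a remark that, if a linearized transition $F$ were included, the result would require $|\det F|\ge1$ (for example, volume-preserving Hamiltonian orbital dynamics). This is outside the stated model, so the proof stays with Eq.~(\ref{eq:aging}).
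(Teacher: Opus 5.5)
Your proof is correct and follows the paper's route. The paper also argues that $Q_{\mathrm{proc}}\succeq0$ gives $\Sigma_{ij}(t_0)\preceq\Sigma^{-}_{ij}(t)$, that the determinant is monotone under this ordering, and that $\log$ is increasing. You differ only in proving the determinant step explicitly via $|A+P|=|A|\,|I+A^{-1/2}PA^{-1/2}|$ where the paper cites it, and in reading the increment as $Q_{\mathrm{proc}}\,(t-t_0)$ to obtain monotonicity at every intermediate time (strict when $Q_{\mathrm{proc}}\neq0$), not just $H_{ij}(t)\ge H_{ij}(t_0)$.
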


\begin{proof}
Since $Q_{\mathrm{proc}}\succeq0$, Eq.~(\ref{eq:aging}) gives $\Sigma^{-}_{ij}(t)\succeq\Sigma_{ij}(t_0)$ for $t\ge t_0$. Determinants are monotone under the Loewner order on positive-definite matrices~\cite{horn13,bhatia}, so $|\Sigma^{-}_{ij}(t)|\ge|\Sigma_{ij}(t_0)|$. Since Gaussian differential entropy increases monotonically with the covariance determinant,

\[
H_{ij}(t)
\ge
H_{ij}(t_0),
\qquad
t\ge t_0,
\]
so knowledge entropy cannot fall without new observations.
\end{proof}

Propositions 1 and 2 pull in opposite directions, and both matter. Fusion pulls uncertainty down whenever new independent evidence arrives, aging pushes it back up whenever nothing new comes in. Together they describe how operational knowledge actually behaves under realistic, intermittent cislunar connectivity.

One consequence is worth stating on its own, there are scenarios where no single asset carries enough information to succeed alone, yet fusing several does.

\textbf{Corollary 2 (Fusion Necessity).}
Suppose a task requires $\Sigma_{ij}\preceq\Sigma_{\mathrm{req}}$ for some maximum tolerable uncertainty $\Sigma_{\mathrm{req}}$, and suppose every individual asset instead has

\[
\Sigma_{ij}^{(k)}
\succ
\Sigma_{\mathrm{req}},
\qquad
k=1,\ldots,N,
\]
so none of them alone is good enough. By Proposition 1, $\Sigma_{ij,\mathrm{fusion}}\preceq\Sigma_{ij}^{(k)}$ for every $k$, so

\[
\Sigma_{ij,\mathrm{fusion}}
\preceq
\Sigma_{\mathrm{req}}
\]
can still hold once enough independent observations are fused. Distributed fusion can succeed where every individual asset fails on its own. This is the theoretical basis for the Fusion Necessity scenario in the Results: each asset there knows only part of the picture and fails the mission requirement alone, but the fused Operational Knowledge State clears it.

\subsection{Quantifying Operational Knowledge}

Standard communication metrics, BER, throughput, AoI, latency, packet delivery ratio, all describe the channel. They say nothing about whether the system knows enough to act. In an autonomous Earth-Moon mission, communication supports navigation and coordination; it is not the goal itself. Decisions need to be judged on the quality of the knowledge behind them, not only on the quality of the signal.

Consequently, introduce a small set of knowledge-centric metrics as follows:

\subsubsection{Knowledge Entropy}

Knowledge Entropy is the uncertainty in the Operational Knowledge State itself, not in the channel. Since each link's belief is Gaussian, its uncertainty is just the differential entropy of that Gaussian.

For link $(i,j)$, we have differential entropy of a Gaussian belief $H_{ij}(t)$ defined in \eqref{eq:entropycorr}. The determinant captures both per-variable uncertainty and correlation between SNR, Doppler, and delay in one number. Lower $H_{ij}(t)$ means better knowledge; higher means the delays, outages, or sparse measurements have caught up with the estimate. Proposition 2 says $H_{ij}(t)$ rises during outages; Proposition 1 says fusion pulls it back down.

At mission level, entropy aggregates across all active links as follows

\begin{equation}
H(t)
=
\sum_{(i,j)\in E(t)}
w_{ij}
H_{ij}(t),
\label{eq:networkentropy}
\end{equation}
with weights $w_{ij}$, $\sum_{(i,j)\in E(t)} w_{ij}=1$, set by mission importance. A crewed-operation link or a safety-critical control channel gets more weight than a delay-tolerant telemetry feed, so $H(t)$ reflects mission relevance, not just raw uncertainty.

\subsubsection{Knowledge Freshness}

Low uncertainty by itself is not enough, an estimate that was accurate ten seconds ago may already be wrong in a fast-changing orbit. Knowledge Freshness, $\Delta_{ij}(t)$, is the time since the belief was last updated by a real observation, which can be defined as

\begin{equation}
\Delta_{ij}(t)
=
t
-
t_{\mathrm{last}}(i,j).
\label{eq:freshness}
\end{equation}

It plays the same role AoI plays for a packet, but for a whole probabilistic belief rather than a single message.

\subsubsection{Probability of Mission Completion}

A cislunar communication system ultimately exists to get a mission done, not to maximize throughput. So we raise a question: across the whole mission, how often do all mission-critical links stay within their required entropy and freshness bounds?

Let $\mathcal{M}_{\mathrm{crit}}$ be the mission-critical links. Then, Probability of Mission Completion (PMC) can be defined as

\begin{equation}
\mathrm{PMC}
=
\Pr
\left[
\forall
(i,j)\in
\mathcal{M}_{\mathrm{crit}},
\;
\forall
t\in[0,T]:
H_{ij}(t)
\le
H^{\mathrm{req}}_{ij}
\land
\Delta_{ij}(t)
\le
\Delta^{\mathrm{req}}_{ij}
\right],
\label{eq:pmc}
\end{equation}
with $H^{\mathrm{req}}_{ij}$ and $\Delta^{\mathrm{req}}_{ij}$ the mission's entropy and freshness requirements for link $(i,j)$. We estimate PMC directly from Monte Carlo runs, as the fraction of mission realizations that stay within these bounds for the full horizon.

\subsection{Baseline Methods, Evaluation Protocol, and Computational Complexity}

This subsection lays out the communication hops used, the baselines themselves, and the complexity argument behind the scalability result.

\subsubsection{Regime B communication hops}

Regime A is Earth-orbiting relays under two-body propagation; Regime B is the Lunar Gateway on its NRHO approximation. Within Regime B we simulated three hops, each from a different leg of the Gateway's communication chain. The hop1 is the Gateway-to-relay crosslink, occasionally blocked as the Gateway moves through its orbit. The hop2 is the direct Gateway-to-ground downlink, visible almost the whole orbit and interrupted only briefly near perilune. The hop3 is the Gateway-to-lunar-surface link, with the longest and least predictable outages of the three, since it is the Moon's own geometry blocking the view, not Earth visibility. 

\subsubsection{Baseline communication strategies}

We compare four alternatives to the full KCDT, plus a zero-delay oracle used only as an upper bound, rather then a competitor.

\emph{State-centric.} It treats the last observation as still valid, with no growth in uncertainty between observations, the naive assumption tested in Fig.~\ref{fig2}.

\emph{Physics-only.} It uses $f_{\mathrm{phys}}(\cdot)$ from Eq.~(\ref{eq:friis}) directly, no learned residual, fixed uncertainty. This isolates what the learned residual and its uncertainty actually add on top of orbital geometry and link-budget physics alone.

\emph{Data-driven.} A standard Kalman-style filter with no physics prior, i.e., the prediction step in Eq.~(\ref{eq:predictionmean}) is replaced by a generic random walk. This isolates the physics prior's contribution from the opposite direction.

\emph{Analytic residual heuristic.} This is used only for the Regime B/hop2 comparison, as a closed-form, exponentially decaying residual covariance, $\Sigma_r(\tau)=\Sigma_0\exp(-\tau/\tau_c)$, standing in for the learned network of Eq.~(\ref{eq:residual}). Its correlation time $\tau_c$ was fit once on Regime A and reused unchanged on hop2, deliberately, to see what happens when a fusion method's assumptions are not re-tuned for a new regime. This is a different object from the neural residual model $r_\theta(\cdot)$ tested across regimes elsewhere in the Results.

\emph{Fixed schedule.} It transmits on a set cadence regardless of link condition, standing in for conventional pre-planned mission communication. It is used only in Fig.~\ref{fig8}, it makes no state estimate at all, so RMSE and calibration do not apply to it, and it is there for a qualitative read on mission outcomes rather than estimation accuracy.

\emph{Oracle.} This is a zero delay, zero uncertainty, observes the true state $s_{ij}(t)$ directly every timestep. It is included in Table~\ref{table2} as a ceiling no delay-constrained method can beat.

Every cell of the decision-layer comparison uses $N=20$ independent episodes per method-regime pair, with the same visibility and outage pattern held fixed across methods within a cell, so differences in RMSE, calibration, and PMC come from the estimator, not from the scenario drawing different conditions for different methods.

\subsubsection{Computational complexity and scalability}

Each link's belief $b_{ij}(t)$ (Eq.~\ref{eq:belief}) is maintained on its own, so prediction and update (Eq.~\ref{eq:predictionmean}--\ref{eq:meanupdate}) only ever touch a small, fixed $d\times d$ covariance for a single link, with $d=3$ throughout. A full fusion step over the network costs

\[
\mathcal{O}\!\left(|E(t)|\, d^3\right),
\]
linear in the number of active links $|E(t)|$ for fixed $d$, not quadratic or cubic, as a fully joint filter over the whole network would be. This per-link independence is what keeps the framework usable at cislunar scale, where active links can range from a few mission-critical channels to hundreds of telemetry and relay connections.

\section{Code availability}

Code for proposed KCDT is publicly available on GitHub at \url{https://github.com/afanali85-tech/Knowledge-Centric-Digital-Twin-KCDT-/releases/tag/v1.0.1} and has been archived at \url{https://doi.org/10.5281/zenodo.21699282}.


\newpage

\bibliography{sn-bibliography}

\section*{Author contributions}

Afan Ali and Daniel Benevides da Costa initiated and conceptualized the original draft. Afan Ali designed the methodology, investigation and validation of the manuscript. Daniel Benevides da Costa and Ali Arshad Nasir provided valuable supervision, suggestions and proofread of the paper.

\section*{Competing interests}

The authors declare no competing interests.

\end{document}